\documentclass[runningheads]{llncs}
\newif\ifanon\anonfalse
\newif\iffinal\finaltrue
\newif\ifcameraready\camerareadyfalse

\usepackage[T1]{fontenc}

\usepackage{graphicx}
\usepackage{hyperref}
\usepackage{color}
\usepackage[table]{xcolor}
\iffinal
  \usepackage[disable]{todo}
\else
\usepackage{todo}
\usepackage{showlabels}
\fi

\usepackage{amsmath}
\usepackage{amsfonts}
\usepackage{wasysym}
\usepackage{mathtools}
\usepackage{amssymb}
\usepackage[inline]{enumitem}
\usepackage{algorithm}
\usepackage[noend]{algpseudocode}
\usepackage{multirow}
\usepackage{nicematrix}
\usepackage{subcaption}
\usepackage{cleveref}

\usepackage{tikz}
\usetikzlibrary{automata, positioning, chains, fit, shapes, patterns, matrix}

\usepackage{pgfplots}
\usepackage{pgfplotstable}
\usepackage{csvsimple}

\newcommand{\calO}{\mathcal{O}}
\newcommand{\notleftrightarrow}
  {\mathrel{\ooalign{$\leftrightarrow$\cr\hidewidth$/$\hidewidth}}}
\newcommand{\integers}{\mathbb{N}}
\newcommand{\bigo}{\mathcal{O}}
\newcommand*\interv[2]{\ensuremath{[{#1}\mathrel{{.}{.}}\nobreak{#2}]}}
\newcommand*\suffx[2]{\ensuremath{\prescript{#2}{}{#1}}}

\newcommand{\lang}{\mathcal{L}}
\newcommand{\aut}{\mathcal{A}}
\newcommand{\Rat}{\mathrm{Reg}}
\DeclareMathOperator{\lequiv}{\equiv_L}
\DeclareMathOperator{\notlequiv}{{\not\equiv}_L}
\newcommand{\lqspace}{\ensuremath{\Sigma^* / \lequiv}}

\DeclareMathOperator{\tequiv}{\equiv_\mathfrak{T}}
\DeclareMathOperator{\nottequiv}{{\not\equiv}_\mathfrak{T}}
\newcommand{\vars}{\mathcal{V}}
\newcommand{\propfor}{\mathcal{F}_0}

\newcommand{\rts}{\mathcal{R}}
\newcommand{\trans}{\ensuremath{\hookrightarrow}}
\newcommand{\post}{\mathrm{Post}}
\newcommand{\pre}{\mathrm{Pre}}

\newcommand{\tgt}{\ensuremath{\mathcal{T}}}
\newcommand{\tgtsep}{\mathcal{T}_{\text{sep}}}
\newcommand{\tgtind}{\ensuremath{\mathcal{T}_{\text{ind}}}}
\newcommand{\indpairs}{\ensuremath{\mathrm{IP}}}

\newcommand{\MEM}{\ensuremath{\mathtt{MEM}}}
\newcommand{\EQ}{\ensuremath{\mathtt{EQ}}}
\newcommand{\VAL}{\ensuremath{\mathtt{VAL}}}

\newcommand{\toff}{\Circle}
\newcommand{\ton}{\CIRCLE}

\newcommand{\lstar}{\ensuremath{L^{*}}}

\newcommand{\linda}{\ensuremath{L^{\text{IndA}}}}
\newcommand{\obstab}{\mathfrak{T}}
\newcommand{\cells}{\mathfrak{C}}
\def\rouge[#1]{\textcolor{purple}{#1}}
\def\bleu[#1]{\textcolor{blue}{#1}}
\def\gris[#1]{\textcolor{gray}{#1}}
\def\vert[#1]{\textcolor{teal}{#1}}
\newcommand{\val}{\nu}
\newcommand{\clauses}{\mathcal{C}}

\newcommand{\core}{\mathcal{K}}
\newcommand{\hyp}{\ensuremath{\mathcal{H}}}
\newcommand{\hypm}{\ensuremath{\mathcal{H}_{m}}}
\makeatletter
\@ifclassloaded{beamer}{}{}
\makeatother
\newcommand*{\cbasis}[1]{\ensuremath{(\mathtt{basis}_{#1})}}
\newcommand*{\creach}[1]{\ensuremath{(\mathtt{reach}_{#1})}}
\newcommand*{\ccong}[1]{\ensuremath{(\mathtt{congr}_{#1})}}
\newcommand*{\cdet}[1]{\ensuremath{(\mathtt{det}_{#1})}}
\newcommand*{\cclos}[1]{\ensuremath{(\mathtt{clos}_{#1})}}

\newcommand*{\csucc}[1]{\ensuremath{(\mathtt{succ}_{#1})}}
\newcommand*{\cind}[1]{\ensuremath{(\mathtt{ind}_{#1})}}
\newcommand*{\csharp}[1]{\ensuremath{(\mathtt{sharp}_{#1})}}
\newcommand*\repr[2]{\ensuremath{\sigma_{#2}({#1})}}
\newcommand*\pcomp[3][]{c_{#2,#3}^{#1}}
\newcommand*\agree[2]{\alpha_{#1}({#2})}
\newcommand*\agreetd[3]{\alpha_{#1}({#2,#3})}
\newcommand{\iter}[2]{{#1}^{(#2)}}
\newcommand{\Rq}[1]{\textit{(Rq#1)}}

\newcommand{\defn}[1]{\emph{#1}}

\newcommand{\nospace}[1]{\makebox[0pt][l]{$#1$}}
\newcommand{\idmat}{IdMAT}

\newcommand{\appref}[2]{\ifcameraready\cite[Apppendix #1]{IdMATTR}\else Appendix~\ref{#2}\fi}

\begin{document}
\title{Automata Learning with an Incomplete but Inductive Teacher}
%
%
\ifanon
  \author{Anonymous Authors}
  \authorrunning{}
\else
\author{Daniel Stan\inst{1,2}\orcidID{0000-0002-4723-5742} \and
Adrien Pommellet\inst{1}\orcidID{0000-0001-5530-152X} \and
Juliette Jacquot\inst{1}}
\authorrunning{D.~Stan, A.~Pommellet, J.~Jacquot}
\fi
%
\ifanon
\else
\institute{Laboratoire de Recherche de l’EPITA, 14-16 Rue Voltaire, Le Kremlin-Bicêtre,
France\\
\email{\{daniel.stan,adrien.pommellet,juliette.jacquot\}@epita.fr}\\
\url{www.lre.epita.fr} \and
ICube, UMR 7357, Université de Strasbourg, Strasbourg, France
}
\fi
\maketitle              
\begin{abstract}
  \defn{Active automata learning} (AAL) under a \defn{Minimally Adequate 
  Teacher} (MAT) has been successfully used to infer a regular language
  through membership and equivalence queries.
  This language might not be fully characterized:
  we are thus interested in finding \emph{any} solution in a
  \defn{target class} of possibly many regular languages.
  Some problems such as \defn{regular language separation} or \defn{inductive 	
  invariant synthesis} in the context of \defn{regular model checking} (RMC) 
  may indeed admit more than one answer.
  We therefore introduce IdMAT:
  a new teacher formalism answering queries with 
  respect to any language in the target class, all at once.
  Such a teacher—tailored towards invariant synthesis—might provide 
  \defn{incomplete} ``don't know'' answers, but also
  \defn{inductive} facts of the form ``if $w_1$ is accepted, so is $w_2$''.
  We pair IdMAT with a novel AAL algorithm $\linda$ that
  \begin{enumerate*}
    \item encodes all uncertainties as a unique SAT instance and does not fork,
    \item leverages incremental SAT solving and UNSAT core analysis, and
    \item handles counterexamples—of the simple or inductive 
    type—in a frugal manner inspired by the Rivest-Schapire 
    refinement technique.
  \end{enumerate*}
  We finally evaluate a prototype implementation in the context
  of regular language separation and RMC.

  \keywords{Active Learning \and Incremental SAT Solving \and UNSAT core \and 
  Regular Model Checking}
\end{abstract}

\section{Introduction}
\label{sec:intro}

A central challenge in the formal verification of distributed systems is that
the number of participating processes is often not known \emph{a priori}.
In particular, \defn{parameterized} model checking~\cite{GS92} addresses the
correctness of a system where the number of copies of a process is a
\defn{parameter}.
A successful approach in this context—known as
\defn{regular model checking}~\cite{Abdulla12} (RMC)—consists
in encoding individual configurations as finite words,
sets of configurations as regular languages, and transitions as transducers.
For example, safety RMC consists in determining whether, from a regular set of
\defn{initial} configurations $S_0$, one can guarantee 
that the system is \defn{safe} and never reaches a regular set of \defn{bad} 
configurations $S_b$. By writing the successor (resp. predecessor)
function $\post$ (resp. $\pre$) and its transitive closure
$\post^*$ (resp. $\pre^*$) , this amounts to checking whether
$\post^*(S_0) \cap \pre^*(S_b) = \emptyset$.
RMC is, in general, undecidable; nevertheless, if one can find a language $L$
and effectively verify
that it contains only reachable states and no bad states,
then the system is safe.

\begin{example}
  \label{ex:equidist}
  Two tokens are initially located at both extremities of a track of even
  length
  $2 \times n$;
  $n \in \integers \setminus \{0\}$ is a \defn{parameter} of the system.
  Both tokens can only move synchronously in opposite directions.
  We want to prove that a collision can't happen,
  i.e. that the two tokens can't end up in the same position.
  
  A configuration is a word on the alphabet
  $\Sigma = \{\ton, \toff\}$ with exactly two $\ton$ letters, where 
  $\ton$ represents a position occupied by a token.
  The system's starting configuration is $\ton (\toff \toff)^{n-1} \ton$.
  Thus, the set of initial configurations for every possible value of $n$ 
  is $S_0 = \lang(\ton (\toff \toff)^* \ton)$.
  The set of forbidden configurations is
  $S_b = \lang(\toff^* \ton \toff \ton \toff^*)$ (the two tokens can overlap
  next step).
  
  The transition relation $\trans$ of the system is such that
  the tokens can only move toward or away from each other.
  As an example,
  $\toff\ton\toff\toff\ton\toff \trans \toff\toff\ton\ton\toff\toff$ and
  $\toff\ton\toff\toff\ton\toff \trans \ton\toff\toff\toff\toff\ton$.
  Thus, if we consider the set $\post^*$ of successors induced by $\trans$, $\{\toff\toff\ton\ton\toff\toff, \ton\toff\toff\toff\toff\ton\} 
  \subseteq \post^*(\{\toff\ton\toff\toff\ton\toff\})$.
\end{example}

To synthesize such a language, one may resort to
\defn{active automata learning}~\cite{Angluin87}
under a \defn{Minimally Adequate Teacher} (MAT):
a \defn{learner} infers a formal model (here, a 
regular language $L$ over a finite alphabet $\Sigma$) by asking membership—does 
$w \in \Sigma$ belong to $L?$—and equivalence—given a language
$H$, does $H = L$, and if it does not, what is a counterexample?—queries to a 
\defn{teacher}.
However, in some cases, it may happen that the learner is actually interested 
in learning \emph{any} language in a \defn{target} class of languages.

Consider the \defn{regular language separation} problem:
finding a regular language $L$—known as a \defn{separator}—such that $L_1
\subseteq L$ and $L \subseteq \overline{L_2}$ for some disjoint languages
$L_1$ and $L_2$.
Multiple solutions may exist:
words in $L_1$ (resp. $L_2$) should (resp. should not) belong to $L$, but words
that belong to neither are under no such constraint.
Yet, dedicated AAL techniques can yield a separator under the
\defn{inexperienced}~\cite{Leucker12} or \defn{incomplete}~\cite{MWSKF23}
teacher framework iMAT that allows ``don't know'' answers to membership queries
if a word's membership is unspecified;
it also replaces equivalence queries with \defn{validity} queries testing
separation.

RMC is a variation of this problem;
however, one may not be able to directly decide whether $L$ separates 
$\post^*(S_0)$ and $\pre^*(S_b)$,
e.g. if we cannot compute these two sets in the first place.
But another similar, sound approach known as
\defn{regular inductive invariant synthesis} can also guarantee safety:
finding a regular language $I$ called an \defn{inductive invariant}
such that $S_0 \subseteq I$, $I \subseteq \overline{S_b}$, and
$\post(I) \subseteq I$.
The language $I$ then obviously separates $\post^*(S_0)$ and $\pre^*(S_b)$.
This approach remains \defn{incomplete}:
a safe system may not admit an invariant.

\begin{example}
  \label{ex:equidist_invar}
  Consider Example~\ref{ex:equidist}.
  The language $I = \lang(\toff^* \ton (\Sigma^2)^* \ton \toff^*)$ is a regular 
  inductive invariant.
  The system is therefore safe for any value of $n$.
\end{example}

Were we to learn regular inductive invariants, we can easily check whether a 
hypothesis $H$ submitted to the teacher is inductive, i.e.
$\post(H) \subseteq H$;
if it is not, an \defn{inductive counterexample} such that
$(c_1 \in H) \land (c_2 \not\in H)$ yet $c_2 \in \post^*(\{c_1\})$
can be found.
Similarly, if $w_2 \in \post^*(\{w_1\})$ for some words $w_1$ and $w_2$, then 
any regular inductive invariant $I$ verifies $w_1 \in I \rightarrow w_2 \in I$;
the learner can make use of the \defn{inductive pair} $w_1 \rightarrow w_2$.

\paragraph{Our contributions.}
We introduce a new AAL framework IdMAT for non-uniquely specified targets that 
allows both inductive pairs and counterexamples (Sec.~\ref{sec:matandidmat}).
We pair this framework with $\linda$, a SAT-based AAL algorithm that extends 
Angluin's $\lstar$ algorithm~\cite{Angluin87}, making use of incremental SAT 
solving and UNSAT core analysis (Sec.~\ref{sec:lice}).
Inspired by Rivest and Schapire~\cite{RS93}, we design a frugal analysis method 
for simple and inductive counterexamples 
(Sec.~\ref{sec:ce_handling} and \ref{sec:ice_handling}).
Finally, we discuss our Python implementation of this framework, then its 
application to regular language separation and RMC (Sec.~\ref{sec:applications} 
and~\ref{sec:exp}).
When omitted or sketched, full proofs can be found in the technical report~\cite{IdMATTR}.

\subsection{Related Work}
\label{sec:related}

Chen et al.~\cite{Chen09}'s algorithm $L^\text{Sep}$ aims to learn a separator 
to two \emph{regular} languages by treating ``don't know'' answers as 
if they were the third letter $\square$ of an alphabet already containing $1$ 
and $0$, then relying on Angluin's $\lstar$ algorithm~\cite{Angluin87} to infer 
an incompletely specified finite state machine—also known as a 3DFA.
The latter's minimal \defn{specification} yields the minimal regular separator.
Yaacov et al.~\cite{DBLP:conf/models/YaacovWAH25} use a similar approach to 
learn concise bug descriptions.

Moeller et al.~\cite{MWSKF23} focus on the generic regular language separation 
problem.
Their $L^*_\square$ AAL algorithm maintains a queue of \defn{incomplete 
observation tables} (IOT, as defined in Sec.~\ref{sec:lice}), of which they 
fill in the blanks with a SAT solver to generate hypotheses.
Whenever the current IOT results in an UNSAT instance, its UNSAT
core~\cite{LiffitonS08} is used to insert one or more IOTs, one new prefix at a 
time.

Grinchtein et al.~\cite{Grinchtein06} synthesize minimal network invariants.
They allow inductive counterexamples of the form ``$w \in \post(I)$ but $w 
\not\in I$'' that they handle by maintaining an IOT queue and forking the 
current learner instance—one fork accepts the counterexample, the other rejects 
it.

As RMC amounts to determining whether the reachable fragment 
$\post^*(S_0)$ contains a bad configuration, many techniques have been
developed in order to compute this language, be it 
abstraction~\cite{AbdullaHH16}, widening~\cite{Touili01}, 
acceleration~\cite{Jonsson00} or AAL~\cite{AKMG04,Neider14,NeiderJ13}.
Other recent techniques combine abstraction with learning, and look for
an invariant of a given specific shape~\cite{EsparzaRW25,czerner2024}.

Chen et al.~\cite{ChenHLR17} use $\lstar$ with a \defn{generous} teacher
accepting \emph{any} regular inductive invariant. Instead of ``don't know'',
this teacher is \defn{strict} (resp. \defn{non-strict}) as it returns
$0$ (resp. $1$), thus targeting 
the smallest (resp. biggest) invariant.
It may be that neither are regular, despite a regular invariant existing.

Neider et al.~\cite{NeiderJ13,Neider14} compute regular inductive 
invariants with an $\lstar$-like AAL algorithm that allows ``don't know'' 
answers.
They maintain a single IOT instead of a queue but rely on an 
iterative SAT-based synthesis algorithm to generate a hypothesis
by treating the IOT as a passive learning sample.

The IdMAT framework formalizes these various use cases by framing 
incomplete learning as looking for an unspecified language in a
\defn{target class}.
A lack of consensus within the target class—a word is accepted by some targets 
but rejected by others—results in ``don't know'' answers to membership queries,
although IdMAT may—subsuming iMAT~\cite{MWSKF23} when it does not—nevertheless 
return inductive constraints verified by every language in the target class.

Moreover, among the $\lstar$ variants tailored for incomplete teachers, our AAL 
algorithm $\linda$ is the only one that doesn't maintain a learner queue or 
synthesize hypotheses with the help of a passive learning algorithm:
instead, it takes a single call to a SAT solver to generate a hypothesis from 
$\linda$'s only IOT.
However, $\linda$ does not necessarily return the smallest automaton in the 
target class:
this property is of limited use for RMC, as any inductive invariant proves 
safety.
Finally, rather than indiscriminately adding the entire set of 
prefixes~\cite{Grinchtein06,NeiderJ13,Neider14} or suffixes~\cite{MWSKF23} of a 
counterexample to the table, we extend Rivest-Schapire's algorithm~\cite{RS93} 
to incomplete teachers, for simple and inductive counterexamples alike.

\section{Preliminaries}
\label{sec:prelim}

\subsection{Regular languages}
\label{sec:ratlang}

An \defn{alphabet} is a finite set $\Sigma$.
A \defn{word} $w$ over $\Sigma$ is any finite sequence over $\Sigma$.
Its \defn{length} is written $|w|$.
A \defn{language} $L$ is a subset of the set $\Sigma^*$ of words.
Given $w_1, w_2 \in \Sigma^*$ and $L_1, L_2 \subseteq \Sigma^*$,
$w_1 \cdot w_2$ denotes their \defn{concatenation}—symbol $\cdot$ may be 
omitted—and
$L_1 \cdot L_2 = \{w_1 \cdot w_2 \mid w_1\in L_1, w_2 \in L_2\}$.
For any $i \in \integers$,
$w[i]$ stands for the $i$-th letter of $w$ if $0 \leq i < |w|$,
and $w[i] = \varepsilon$ otherwise.
For $0 < i \leq |w|$, $w^i$ is the \defn{prefix}
$w[0] \cdots w[i-1]$.
For $0 \leq i < |w|$, $\suffx{w}{i}$ is the \defn{suffix}
$w[i] \cdots w[|w|-1]$.
Both $w^i$ and $\suffx{w}{i}$ are equal to $\varepsilon$ for other values of 
$i$.

\begin{definition}
  A \defn{Deterministic Finite Automaton} (DFA)
  $\aut = (Q, \Sigma, \delta, q_i, F)$ is a tuple s.t.
  $Q$ is a finite \defn{set of states}, $\Sigma$ a finite alphabet,
  $\delta: Q \times \Sigma \rightarrow Q$ a \defn{transition function},
  $q_i \in Q$ the \defn{initial} state,
  $F \subseteq Q$ the set of \defn{final} states.
\end{definition}

The inductive closure $\delta^*: Q \times \Sigma^* \to Q$ of $\delta$ is s.t.
for $w \in \Sigma \setminus \{\varepsilon\}$ and $q \in Q$,
$\delta^*(q, \varepsilon) = q$ and
$\delta^*(q, w) = \delta(q', \suffx{w}{1})$ where $q' = \delta^*(q, w[0])$.
If $\delta^*(q_i, w) = q$, $w$ is an \defn{access sequence} for state $q$.
$\aut$ \defn{accepts} $w$ if $\delta^*(q_i, w) \in F$;
we denote this predicate $\aut(w)$.
The language $\lang(\aut)$ \defn{accepted} by $\aut$ is the set of all words
accepted by $\aut$.
A language is \defn{regular} or \defn{rational} if it is accepted by some DFA.
$\Rat(\Sigma)$ denotes the set of regular languages over $\Sigma$,
characterized as follows:

\begin{definition}[Congruence relation]
  Given $L \subseteq \Sigma^*$, $\lequiv$ is the equivalence relation on
  $\Sigma^*$ such that $w_1 \lequiv w_2$ if and only if
  $\forall s \in \Sigma^*$,
  $w_1 \cdot s \in L \leftrightarrow w_2 \cdot s \in L$.
\end{definition}

\begin{theorem}[Myhill-Nerode]
  \label{th:mn}
  $L$ is regular if and only if $\lequiv$ has finite index.
\end{theorem}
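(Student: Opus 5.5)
We prove both directions of Theorem~\ref{th:mn} directly, working with $\lequiv$ as defined above. First, $\lequiv$ is an equivalence relation, and it is a right congruence: if $w_1 \lequiv w_2$ then $w_1 a \lequiv w_2 a$ for every $a \in \Sigma$. Indeed, for any $s \in \Sigma^*$ we can apply the definition to the suffix $a \cdot s$. Moreover, $\lequiv$ saturates $L$: taking $s = \varepsilon$ shows that $w_1 \lequiv w_2$ implies $w_1 \in L \leftrightarrow w_2 \in L$.

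\textbf{($\Rightarrow$)} Suppose $L = \lang(\aut)$ for a DFA $\aut = (Q, \Sigma, \delta, q_i, F)$. Consider the equivalence $w_1 \sim w_2$ iff $\delta^*(q_i, w_1) = \delta^*(q_i, w_2)$; it has at most $|Q|$ classes. We show $\sim$ refines $\lequiv$. The key fact is the identity $\delta^*(q, uv) = \delta^*(\delta^*(q,u), v)$, proved by a routine induction on $|u|$ from the inductive definition of $\delta^*$. Given this identity, if $w_1 \sim w_2$ then $w_1 s$ and $w_2 s$ reach the same state for every $s$. Hence $w_1 s \in L \leftrightarrow w_2 s \in L$, and so $w_1 \lequiv w_2$. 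The index of $\lequiv$ is therefore at most $|Q|$, which is finite.

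\textbf{($\Leftarrow$)} Suppose $\lequiv$ has finite index, and write $[w]$ for the class of $w$. We build the quotient automaton
\[
\aut_L = (\lqspace, \Sigma, \delta, [\varepsilon], \{[w] \mid w \in L\}),
\qquad \delta([w], a) = [wa].
\]
The state set is finite by hypothesis. The transition function $\delta$ is well defined because $\lequiv$ is a right congruence. The set of final states is well defined because $\lequiv$ saturates $L$, so $[w]$ is final iff $w \in L$. By induction on $|u|$ we obtain $\delta^*([w], u) = [wu]$. Taking $w = \varepsilon$ gives $\delta^*([\varepsilon], u) = [u]$, which is final iff $u \in L$. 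Thus $\lang(\aut_L) = L$, and $L$ is regular.

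The only real obstacle is well-definedness in the ($\Leftarrow$) direction: both $\delta$ and the final-state set must be independent of the chosen representative. The congruence and saturation facts established at the start handle exactly this. A secondary point of care is the paper's particular inductive definition of $\delta^*$, which peels off the first letter. The concatenation identity must be derived from that form, by induction on $|u|$ with $u = a u'$. The rest is routine.
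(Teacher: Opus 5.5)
Your proof is correct and complete. The paper gives no proof of this classical result. It only recalls, right after the statement, the quotient automaton $\aut_L$ with $\delta_L([p]_{\lequiv}, a) = [pa]_{\lequiv}$ and $F_L = \{[w]_{\lequiv} \mid w \in L\}$, which is exactly your ($\Leftarrow$) construction. Your ($\Rightarrow$) refinement argument is the standard one, and you were right to derive the concatenation identity from the first-letter-peeling form of $\delta^*$. In those inductions on $u = a u'$, state the hypothesis for all start states $q$ (respectively all $w$), since the recursive step changes the start state.
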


From $\lequiv$'s finite quotient space $\lqspace$, one can infer the unique 
minimal (in terms of states) DFA
$\aut_L = (\lqspace, \Sigma, \delta_L, [\varepsilon]_{\lequiv}, F_L)$ 
accepting $L$, where $\delta_L([p]_{\lequiv}, a) = [pa]_{\lequiv}$ and
$F_L = \{[w]_{\lequiv} \mid w \in L\}$.
It is called $L$'s \defn{canonical} DFA.

If $w_1 \cdot s \in L \notleftrightarrow w_2 \cdot s \in L$ for some
$s \in \Sigma^*$, we say that $s$ \defn{distinguishes} $w_1$ and $w_2$;
it is witness to $w_1 \notlequiv w_2$.
A \defn{state cover} of $L$ is a set $P$ such that for any congruence class 
$[w]_{\lequiv} \in \lqspace$, there exists a unique $p \in P$,
$p \in [w]_{\lequiv}$.
A set $S$ is a \defn{characterization set} of $L$ if for any
$w_1, w_2 \in \Sigma^*$ such that $w_1 \notlequiv w_2$,
there exists $s \in S$ distinguishing $w_1$ and $w_2$.

\subsection{Active automata learning}
\label{sec:aal}

We recommend Steffen et al.'s chapter~\cite{Steffen2011} for a more thorough 
introduction to active automata learning.

\paragraph{The MAT framework.}
Let $L \in \Rat(\Sigma)$ be a target language.
The \defn{Minimally Adequate Teacher} (MAT) for $L$ consists of two functions:
a \defn{membership} oracle $\MEM: w \in \Sigma^* \mapsto ``w \in L?" \in 
\{0, 1\}$ and an \defn{equivalence} oracle $\EQ: \Rat(\Sigma) \to \Sigma^* 
\cup \{1\}$ such that $\EQ(H) = 1$ if $H = L$ and $\EQ(H) = w$ s.t. $w \in H 
\notleftrightarrow w \in L$ otherwise—$w$ is said to be a 
\defn{counterexample}.
\defn{Active Automata Learning} (AAL) is a two player game consisting in a 
\defn{learner} using the MAT to compute $L$, trying to perform as few
calls to the oracles $\MEM$ and $\EQ$—respectively called membership queries 
(MQs) and equivalence queries (EQs)—as possible.

\paragraph{Data structures.}
Angluin's $\lstar$ algorithm~\cite{Angluin87} maintains two sets $P, S 
\subseteq \Sigma^*$, informally called the sets of \defn{prefixes} and 
\defn{suffixes}.
They determine the MQs to be submitted to the MAT, the answers to which being 
stored in a table:

\begin{definition}
  An \defn{Observation Table} (OT) is a function
  $\obstab: (P \cup (P \cdot \Sigma)) \times S \to \{0, 1\}$
  such that $\obstab(p, s) = \MEM(p \cdot s)$
  (also written $\obstab(p \cdot s) = \MEM(p \cdot s)$).
\end{definition}

The set $P \cdot \Sigma$ is called the \defn{frontier}, and
$\cells = (P \cup (P \cdot \Sigma)) \cdot S$, the set of \defn{cells} of 
$\obstab$.
The latter is the set of words one needs to query to fill in the table.

An OT $\obstab$ induces a congruence relation $\tequiv$ over $P \cup (P 
\cdot \Sigma)$ such that $w_1 \tequiv w_2$ if and only if $\forall s \in 
S$, $\MEM(w_1 \cdot s) = \MEM(w_2 \cdot s)$, i.e. lines $w_1$ and $w_2$ of 
$\obstab$ are 
identical.
Note that $\tequiv$ under-approximates $\lequiv$, as
$w_1 \nottequiv w_2 \rightarrow w_1 \notlequiv w_2$.

An OT is \defn{sharp} if $\forall p_1, p_2 \in P$ such that $p_1 \neq p_2$, 
$p_1 \nottequiv p_2$:
its prefixes are demonstrably pairwise distinguished.
It is \defn{closed} if $\forall w \in P \cdot \Sigma$,
$\exists p \in P$, $p \tequiv w$:
any line labeled by a word $w \in P \cdot \Sigma$ in the frontier
is identical to a line labeled by a prefix $p$.
If $\obstab$ is sharp and closed, this prefix $p$ is unique, called the
\defn{representative} of $w$, and written $\repr{w}{\obstab}$.
A sharp and closed OT induces a \defn{hypothesis}
$\hyp_\obstab = (P, \Sigma, \delta_\obstab, \varepsilon, F_\obstab)$ 
such that $F_\obstab = \{p \in P \mid \MEM(p) = 1\}$ and $\forall p \in P$, 
$\forall a \in \Sigma$, $\delta_\obstab(p, a) = \repr{p \cdot a}{\obstab}$.

For any $w \in \Sigma^*$, we define
$\repr{w}{\obstab} = \delta^*_\obstab(p, w)$.
This does not contradict the previous definition of $\repr{w}{\obstab}$ on $P \cdot \Sigma$ by design of $\delta_\obstab$.
Moreover, $\hyp_\obstab(\repr{w}{\obstab}) = \hyp_\obstab(w)$
for any $w \in \Sigma^*$ and 
$\repr{p}{\obstab} = p$ for any $p \in P$.
Intuitively, $\repr{w}{\obstab}$ is the prefix in $P$ identifying the state $w$ leads to in $\hyp_\obstab$.

\paragraph{The learning algorithm $\lstar$.}
Initially, $P = S = \{\varepsilon\}$.
Algorithm $\lstar$ queries the MAT as it computes $\obstab$, then
seeks closure defects, i.e. whether $\exists p \cdot a \in P \cdot \Sigma$
such that $\forall p' \in P$, $p \cdot a \not\equiv_\obstab p'$.
If one such $p \cdot a$ exists, it is added to $P$.
The OT is therefore always kept sharp by design.
Once it is closed, the learner computes $\hyp_\obstab$.
If $\EQ(\hyp_\obstab)$ returns a counterexample $w$, then it can be proven that 
there exists a \defn{refining} suffix of $w$ s.t. its addition to $S$ causes a 
closure defect.

The sharpness of $\obstab$ guarantees termination due to the elements of $P$ 
demonstrably belonging to different classes of the quotient space $\lqspace$.
$\lstar$'s correction is a consequence of $\tequiv$ being an 
under-approximation of $\lequiv$.
As $\lstar$ ends, $P$ is a state cover of $L$ and $S$ a characterization set.

\paragraph{Counterexample analysis.}
A method designed by Rivest and Schapire~\cite{RS93} (RS) infers
a refining suffix $s$ of a counterexample $w$ in $\bigo(\log|w|)$ MQs.
Intuitively, it consists in replacing a prefix of $w$ with its representative
and observing at which point doing so alters $\MEM$'s output.
Formally, for $i \in \interv{0}{|w|}$, we consider the $i$-th
\defn{partial computation} $\pcomp[\obstab]{w}{i} = \repr{w^i}{\obstab} \cdot 
\suffx{w}{i}$—we may omit $\obstab$ whenever contextually relevant—and its 
matching \defn{evaluation predicate} $\agree{w}{i} = \MEM(\pcomp{w}{i})$.

\begin{definition}
  A \defn{breaking point} (BP) w.r.t. counterexample $w \in \Sigma^*$ to 
  $\hyp_\obstab$ is $i \in \interv{0}{|w|-1}$ s.t.
  $\agree{w}{i} \neq \agree{w}{i+1}$.
\end{definition}

If $i$ is a BP, $\suffx{w}{i}$ separates $\repr{w^i}{} \cdot 
w[i]$ and $\repr{w^{i+1}}{}$, yet $\repr{w^i}{} \cdot w[i] \tequiv 
\repr{w^{i+1}}{}$.
Thus, adding $\suffx{w}{i}$ to $S$ causes a closure defect and induces a 
refinement of $\hyp$.
A BP always exists due to $\agree{w}{0} \neq \agree{w}{|w|}$.
If we explore table $[\agree{i}{w}]_{i \in \interv{0}{|w|}}$ dichotomically,
finding a BP only requires a logarithmic number of MQs.

\subsection{SAT solving}

Given a set of \defn{variables} $\vars$, we consider the set
$\varphi \vcentcolon= 0 \mid 1 \mid x \in \vars \mid \neg \varphi
\mid \varphi \land \varphi \mid \varphi \lor \varphi
\mid \varphi \rightarrow \varphi \mid \varphi \leftrightarrow \varphi$
of \defn{propositional formulas} (PFs) $\propfor$.
A \defn{valuation} $v$ is a partial function $v: \vars \to \{0, 1\}$.
A valuation $v'$ \defn{subsumes} $v$ if $v$ is a subset of $v'$ in terms of 
binary relations over $\vars \times \{0, 1\}$.
The PF $v[\varphi]$ is obtained by replacing any variable $x \in \vars$ occuring in
$\varphi$ by $v(x)$ whenever the latter is defined.

A \defn{model} $m$ for a formula $\varphi$—written $m \vDash \varphi$—is a 
valuation s.t. $m[\varphi] = 1$ according to Boolean semantics 
and every variable of $\varphi$ belongs to $m$'s domain.
The \defn{satisfiability} (SAT) problem for a PF $\varphi$
consists in determining whether a model for $\varphi$ exists;
$\varphi$ is said to be SAT if it does, and UNSAT otherwise.

A \defn{set of clauses} is a finite set $\clauses$ of PFs in 
conjunctive normal form (CNF), interpreted as the PF
$\underset{c \in \clauses}{\bigwedge} {c}$.
If $\clauses$ is UNSAT, an \defn{UNSAT core} of $\clauses$ is a minimal
(according to $\subseteq$) subset $\clauses'$ of $\clauses$ such that
$\clauses'$ is UNSAT as well.
A \defn{SAT solver} is a program that decides for any set of
clauses $\clauses$ whether $\clauses$ is SAT or not and returns a model $m$ of 
$\clauses$ if it is.
A solver is \defn{incremental} if we can add new clauses to $\clauses$ and try
to solve the updated problem without resetting the solver.

\section{The \idmat{} Framework}
\label{sec:matandidmat}

\subsection{Inductive oracles}
\label{sec:def_idmat}

We introduce an extension to the MAT framework where the target language is 
not unique:
given a \defn{target class} $\tgt \subseteq 2^{\Sigma^*}$,
the goal of the learner in this setting is to identify \emph{one}
unspecified language $T \in \tgt$.
To this end, we introduce an inductive teacher IdMAT consisting of two 
oracles:

\begin{definition}
  \label{def:memindoracle}
  An \defn{inductive membership oracle} is a function $\MEM: \Sigma^* \times 
  2^{\Sigma^*} \to \{0, 1\} \cup 2^{\Sigma^* \times \Sigma^*}$
  such that for any finite set $A \subseteq \Sigma^*$ and $w \in \Sigma^*$,
  exactly one of the following holds:
  \begin{itemize}
    \item $\MEM(w, A) = 1$ and $\forall T \in \tgt, w \in T$;
    every target language accepts $w$.
    \item $\MEM(w, A) = 0$ and $\forall T \in \tgt, w \notin T$;
    every target language rejects $w$.
    \item $\MEM(w, A) \subseteq 
    \{(w_1, w_2) \in (A \times \{w\}) \cup (\{w\} \times A) \mid
    \forall T \in \tgt, w_1 \in T \rightarrow w_2 \in T\}$.
  \end{itemize}
\end{definition}  

Intuitively, the third case of Definition~\ref{def:memindoracle} is an 
\defn{incomplete} answer that states that $w$'s membership status is not constant 
over the target class.
Nevertheless, the learner can provide ``hint words'' in a set $A$, and
the teacher can then provide \defn{inductive pairs} $(w_1,w_2)$ relating $w\in \{w_1,w_2\}$
to these words,
in the form
``For any target $T$, if $w_1$ belongs to $T$, then so must $w_2$.''.
As a shorthand, we write $\MEM(w) = \MEM(w, \emptyset)$ reducing the
possible answers to $0$, $1$ or $\emptyset$, that is,
respectively ``Yes'', ``No'' or ``Don't know'' answers.

Similarly, the following Definition~\ref{def:eqindoracle} generalizes equivalence
queries to account for inductive pairs being counterexamples.
Moreover, notice that
while the oracles are not-uniquely defined, the provided
counterexamples must be consistent with the answers returned by the
membership oracle:

\begin{definition}
  \label{def:eqindoracle}
  An \defn{inductive validity oracle} is a function
  $\VAL: \Rat(\Sigma) \to \{1\} \cup \Sigma^* \cup (\Sigma^* \times \Sigma^*)$ 
  such that for any hypothesis $H \in \Rat(\Sigma)$:
  \begin{itemize}
    \item $\VAL(H) = 1$ if and only if $H \in \tgt$;
    the hypothesis belongs to the target class.

    \item If $\VAL(H) = w \in \Sigma^*$,
        then $\MEM(w) \in \{0,1\}$ and $w \in H
        \leftrightarrow \MEM(w) = 0$; word $w$ is a \defn{simple} counterexample.
    \item If $\VAL(H) = (w_1, w_2) 
    \in \Sigma^* \times \Sigma^*$, then
    $(w_1, w_2) \in \MEM(w_1, \{w_2\})$, $w_1 \in H$ but
    $w_2 \notin H$; pair $(w_1, w_2)$ is an \defn{inductive} counterexample.
  \end{itemize}
\end{definition}

\subsection{Instantiating IdMAT}
\label{sec:applications}

\paragraph{Regular language separation.}
Consider two languages $L_1, L_2 \subseteq \Sigma^*$.
Let $\tgtsep = \{L \in \Rat(\Sigma) \mid L_1 \subseteq L \subseteq 
\overline{L_2}\}$ be the target class of regular separators.
For any $w \in \Sigma^*$, finite $A \subseteq \Sigma^*$,
and $H \in \Rat(\Sigma)$, IdMAT oracles complying with 
Definitions~\ref{def:memindoracle} and 
\ref{def:eqindoracle} are defined as follows:
\begin{itemize}
  \item $\MEM(w, A) = 1$ if $w \in L_1$, $0$ if $w \in L_2$,
  and $\emptyset$ otherwise.
  \item $\VAL(H)$ is any word in
  $(H \setminus L_2) \cup (L_1 \setminus H)$ if $H \not\in \tgtsep$, and
  $1$ otherwise.
\end{itemize}

They are effective if $w \in L_1$,
$w \in L_2$, $L_1 \subseteq L$ and $L \subseteq \overline{L_2}$ are 
decidable for every $w \in \Sigma^*$ and $L \in \Rat(\Sigma)$.
The inductive capabilities (hint words, inductive counterexamples) of the 
IdMAT are ignored,
restricting $\MEM$'s output to $0$, $1$ and the ``don't know'' answer 
$\emptyset$,
thus subsuming Moeller et al.'s iMAT framework~\cite{MWSKF23}.

\paragraph{Regular model checking.}
Let $\rts = (\Sigma, {\trans}, S_0, S_b)$ be a
\defn{regular transition system} (RTS) with $S_0, S_b \in \Rat(\Sigma)$ and
${\trans} \in \Rat(\Sigma \times \Sigma)$ a length-preserving finite transducer 
inducing an effectively computable \defn{successor} relation
$\post: \Rat(\Sigma) \to \Rat(\Sigma)$ such that
$\post(L) = \{w_2 \mid w_1 \in L, (w_1, w_2) \in {\trans}\}$.
Similarly, $\pre: \Rat(\Sigma) \to \Rat(\Sigma)$ must be an effectively 
computable \defn{predecessor} relation such that
$\pre(L) = \{w_1 \mid w_2 \in L, (w_1, w_2) \in {\trans}\}$.
However, note that
$\post^*(L) = \underset{k \geq 0}{\bigcup} \post^k(L)$ and  
$\pre^*(L) = \underset{k \geq 0}{\bigcup} \pre^k(L)$ may not be computable.

Consider the class
$\tgtind = \{I \subseteq \Sigma^* \mid (S_0 \subseteq I) \land
(I \subseteq \overline{S_b}) \land (\post(I) \subseteq I)\}$
of inductive invariants.
For every $w \in \Sigma^*$, finite $A \subseteq \Sigma^*$,
and $H\in \Rat(\Sigma)$, consider the IdMAT oracles complying with 
Definitions~\ref{def:memindoracle} and \ref{def:eqindoracle}:
\begin{itemize}
  \item $\MEM(w, A) = 1$ if $w \in \post^*(S_0 \cap \Sigma^{|w|})$,
  $0$ if $w \in \pre^*(S_b \cap \Sigma^{|w|})$,
  and    
  $\{w\} \times (\post^*(\{w\})\cap A) \cup (\pre^*(\{w\})\cap A)\times \{w\}$
  otherwise.
  \item $\VAL(H) = w$ for some $w \in (S_0 \setminus H) \cup (S_b \cap H)$
  if such a word exists,
  else $(w_1, w_2)$ for some $w_1 \in H$ and $w_2 \notin H$
  such that $(w_1, w_2) \in \MEM(w_1, \{w_2\})$ if such a pair exists,
  else $1$.
\end{itemize}

These oracles are effectively computable:
$\trans$ is length-preserving and $S_0 \cap \Sigma^{|w|}$ is finite, thus
$\post^*(S_0 \cap \Sigma^{|w|})$ is finite and can be found using an 
iterative fixpoint algorithm.
The same is true of $\pre^*(S_b \cap \Sigma^{|w|})$, $\post^*(\{w\})\cap A$, 
and $\pre^*(\{w\})\cap A$.
Moreover, $\post(H)$ is regular and we can look for inductive counterexamples 
by seeking $w_2 \in \post(H) \setminus H$ then computing
$w_1 \in \pre^*(\{w_2\}) \cap H$.

\begin{example}
  \label{exple:equidist_indpair}
  Consider Example~\ref{ex:equidist}.
  Words $w_1 = \ton\toff\toff\ton\toff$ and $w_2 = \toff\ton\ton\toff\toff$   
  belong neither to $\post^*(S_0)$ nor $\pre^*(S_b)$, being configurations of 
  odd length with an even number of $\toff$ empty positions in-between their 
  two $\ton$ tokens.
  Yet $w_2 \in \post^*(\{w_1\})$, thus $(w_1, w_2)$ is an inductive pair
  and $(w_1, w_2) \in \MEM(w_1, \{w_2\})$.
\end{example}

\section{The $\linda$ Active Learning Algorithm}
\label{sec:lice}

We assume that the oracles defined in Section~\ref{sec:def_idmat} are 
effectively computable for a non-empty target class $\tgt$.
We aim to find one $T \in \tgt \cap \Rat(\Sigma)$ if it exists and to this end 
introduce $\linda$, a new AAL algorithm under the IdMAT framework.

\subsection{Data structures}
\label{sec:data}

Algorithm $\linda$ is based on $\lstar$.
Similarly, it maintains two sets $P$ and $S$ and an 
OT variant $\obstab$ that also accounts for incomplete answers:

\begin{definition}
  An \defn{Incomplete Observation Table} (IOT) is
  $\obstab: (P \cup (P \cdot \Sigma)) \times S \to \{0, 1, \square\}$ s.t.
  $\obstab(p, s) = \square$ if $\MEM(p \cdot s) \not\in \{0, 1\}$
  and $\MEM(p \cdot s)$ otherwise.
\end{definition}

Symbol $\square$ stands for incomplete answers to MQs, in a similar fashion to 
$\lstar_\square$~\cite{MWSKF23}.
An IOT is shown in Example~\ref{ex:equidist_unsat}.
A sub-table $\obstab_B$ of $\obstab$ for some $B \subseteq P$ is $\obstab$'s 
restriction to $(B \cup (B \cdot \Sigma)) \times S$.
Unlike $L^*$ under the MAT framework or $\lstar_\square$ under iMAT, we must 
also account for inductive answers to MQs that constrain $\obstab$'s $\square$ 
cells,
since $\obstab$ does not fully record these:

\begin{definition}
  \label{def:indpairs}
 Let $U = \{w \in \cells \mid \obstab(w) = \square\}$ be the set of 
 \defn{incomplete} cells.
 The set of \defn{inductive pairs} is
 $\indpairs = \underset{w \in U}{\bigcup} \MEM(w, U)$.
\end{definition}

\subsection{Outlining the learning loop}
\label{sec:main_loop}

Unlike other AAL algorithms for incomplete teachers, $\linda$ does not 
maintain several IOTs nor use passive learning algorithms to infer a 
hypothesis from the table.
Instead, our intuition is to maintain a single IOT $\obstab$,
but design $P$ and $S$ in such a fashion
they will at some point respectively \emph{contain}
a state cover $B$—known as a \defn{basis}—and a characterization set of some
$T \in \tgt \cap \Rat(\Sigma)$.

To infer a hypothesis $H$ that we can submit to the IdMAT, a 
prefix-closed subset $B \subseteq P$ called the \defn{basis} is chosen, 
inducing a sub-table $\obstab_B$.
The ``blanks'' of $\obstab_B$ are then ``filled''—i.e. $\square$ cells are 
arbitrarily assigned values—in a manner that makes the resulting table 
sharp, closed, and compliant with the inductive constraints recorded in 
$\indpairs$.
We encode the existence of such a basis and sub-table as a SAT instance 
$\clauses_{P, S}$;
given a model $m$ to this instance, $\obstab_m$ denotes the OT that results 
from using $m$ to choose a basis $B$ and fill IOT $\obstab_B$'s $\square$ cells.

Algorithm~\ref{algo:lice} summarizes $\linda$'s main loop.
Similarly to $\lstar$, $P, S = \{\varepsilon\}$ initially.  
In a broad sense, closure defects result in $P$ being updated while 
counterexamples yield new suffixes to add to $S$, although the finer details 
of these steps significantly differ from $L^*$ under the MAT framework.
Note that if the target class is a singleton $\tgt \cap \Rat(\Sigma) = \{L\}$, 
$\MEM$ never returns incomplete answers, and a run of $\linda$ is essentially 
identical to a run of $L^*$ with RS under the MAT framework:
it maintains the same OT and performs the same queries.

\begin{algorithm}
  \begin{algorithmic}
    \State $P, S \gets \{\varepsilon\}$
    \Loop
      \State Update the IOT $\obstab$ and the clauses
      $\clauses_{P, S}$.
      \Comment{See Section~\ref{sec:fill_subtable} and 
        \appref{A}{app:sat_clauses}.
        }
      \While{$\clauses_{P, S}$ is satisfied by some model $m$}
        \State From $m$, build a hypothesis $\hypm$.
        \Comment{See Section~\ref{sec:fill_subtable} and 
        \appref{A}{app:sat_clauses}.
        }
        \State $c \gets \VAL(\hypm)$
        \If{$c = 1$}
          \Return $\hypm$
          \Comment{An invariant has been found.}
        \ElsIf{$c = w \in \Sigma^*$}
          \State Analyze $w$ and update $S$, $\obstab$ and $\clauses_{P, S}$.
          \Comment{See Section~\ref{sec:ce_handling}.}
        \ElsIf{$c = (w_1, w_2)$}
          \State Analyze $(w_1, w_2)$ and update $S$, $\obstab$ and 
          $\clauses_{P, S}$.
          \Comment{See Section~\ref{sec:ice_handling}.}
        \EndIf
      \EndWhile
      \State $P \gets P \cup \{{p \cdot a} \mid \cclos{p, a} \in
      \texttt{UnsatCore}(\clauses_{P, S})\}$
      \Comment{See Section~\ref{sec:fill_subtable}.}
    \EndLoop
  \end{algorithmic}
  \caption{The learning algorithm \linda.}
  \label{algo:lice}
\end{algorithm}

\subsection{Generating the hypothesis}
\label{sec:fill_subtable}

We encode as a SAT instance the existence of both a basis $B$ and a membership 
assignment function that would make the resulting sub-table sharp and closed.
A set of clauses $\clauses_{P, S}$ detailed in \appref{A}{app:sat_clauses}
guarantees that, if $m$ is a model of $\clauses_{P, S}$, then
it yields a sub-table $\obstab_m$ of $\obstab$ that is sharp and closed.
For each $w \in \cells$ such that $\obstab(w) = \square$, we introduce a
variable $x_w$ representing $w$'s membership status.
For every $p \in P$, a variable $b_p$ determines whether $p$ belongs to $B$.
Finally, variable $e_{p, a, q}$ states that $p \cdot a \in P \cdot \Sigma$ 
must be equivalent to $q \in P$ w.r.t. the OT generated.
Of particular interest are the clauses $\cclos{p, a}$ for $p \in P$ and $a 
\in \Sigma$ that guarantee $p \cdot a \equiv_{\obstab_m} p'$ for some $p' \in 
P$, hence, $\obstab_m$'s closure:
\begin{equation}
  \tag{$\mathtt{clos}_{p, a}$}
  b_p \rightarrow\bigvee_{p'\in P} e_{p, a, p'}
\end{equation}
Moreover, for any pair $(w_1, w_2) \in \indpairs$, there exists a clause 
$x_{w_1} \rightarrow x_{w_2}$ in $\clauses_{P, S}$.

\paragraph{Using the UNSAT core to extend the set of prefixes.}  
Should the instance $\clauses_{P, S}$ end up being UNSAT,
then it can be proven (see \appref{A}{app:sat_clauses} for a detailed proof)
that its core 
always contains at least one clause $\cclos{p, a}$.
Then for each $(p,a) \in P \times \Sigma$ such that
$\cclos{p, a}$ is in the core, we add $p \cdot a$ to $P$.
Intuitively, an UNSAT instance means that, no matter the basis chosen and the 
values assigned to $\square$ cells, the resulting sub-table will contain a 
closure defect.
Theorem~\ref{thm:termination} later states that these core-based refinements
guarantee termination.

\begin{example}
  \label{ex:equidist_unsat}
  Consider the following IOT $\obstab$ induced by Example~\ref{ex:equidist} 
  after counterexample $\ton\toff\ton$ to hypothesis $H = \Sigma^*$
  results in a first refinement of $\obstab$.
  
  \begin{center}
    \begin{minipage}{0.58\linewidth}
      Due to $|P| = 1$, the only possible basis is $B =\{\varepsilon\}$ and 
      $\obstab_B = \obstab$.
      The closure condition results in $\varepsilon \equiv_{\obstab_m} \toff 
      \equiv_{\obstab_m} \ton$ for any possible model $m$.
      Thus, $\bleu[\obstab_m(\varepsilon, \toff\ton) = 0]$ and 
      $\rouge[\obstab_m(\toff, \ton) = 1]$.
      But both cells describe the membership status of word $\toff\ton$.
      There is a contradiction, the instance is UNSAT, and $\toff$, $\ton$ are 
      added to $P$.
    \end{minipage}
    \hfill
    \begin{minipage}{0.38\linewidth}
    {\renewcommand{\arraystretch}{1.1}
      \setlength{\tabcolsep}{6pt}
      \begin{tabular}{|c|c|ccc|}
        \hline
        \multicolumn{2}{|c|}{} & 
        \multicolumn{3}{c|}{\cellcolor[gray]{0.9}$\mathbf{S}$} \\
        \cline{3-5}
        \multicolumn{2}{|c|}{} & $\varepsilon$ & $\ton$ & $\toff\ton$ \\
        \hline
        \cellcolor[gray]{0.9}$\mathbf{P}$ & $\varepsilon$ & 
        $\square$ & $\square$ & $\bleu[\square]$ \\        
        \hline
        \cellcolor[gray]{0.9} & $\toff$ & 
        $\square$ & $\rouge[\square]$ & $\square$ \\
        \multirow{-2}{*}{\cellcolor[gray]{0.9}$\mathbf{P \cdot \Sigma}$} & 
        $\ton$ & $\square$ & $1$ & $0$ \\
        \hline
      \end{tabular}}
    \end{minipage}
  \end{center}
\end{example}

\paragraph{Building the hypothesis.}
If the instance is SAT, we denote $m$ the model returned by the SAT solver 
and $\obstab_m$ the resulting $\square$-free sub-table. 
From  $\obstab_m$, a hypothesis $\hypm$ can be built, as described in 
Section~\ref{sec:ratlang}.

\begin{example}
  \label{ex:equidist_hyp}
  Consider the updated IOT resulting from Example~\ref{ex:equidist_unsat}'s 
  refinement.
  
  \begin{center}
    \begin{minipage}{0.56\linewidth}
      Were the SAT solver to pick $\rouge[B = \{\varepsilon, \ton\}]$ as a 
      basis,
      grayed lines $\toff\toff$ and $\toff\ton$ are \gris[dropped] from 
      $\obstab_B$.
      Filling the remaining lines and $\square$ cells as depicted results in a 
      sharp, closed table $\obstab_m$ from which we can infer the following 
      two state DFA $\hypm$:
      \begin{center}
        \begin{tikzpicture}[x=2.75cm, y=2cm]
          \node[state, initial, accepting] (q0) at (0,0) {$\varepsilon$};
          \node[state] (q1) at (1,0) {$\ton$};
        
          \path (q0) edge[->] node[above] {$\toff, \ton$} (q1);
          \path (q1) edge[->, bend left] node[below] {$\toff, \ton$} (q0);
        \end{tikzpicture}
      \end{center}
    \end{minipage}
    \hfill
    \begin{minipage}{0.40\linewidth}
      {\renewcommand{\arraystretch}{1.1}
      \setlength{\tabcolsep}{6pt}
      \begin{tabular}{|c|c|ccc|}
        \hline
        \multicolumn{2}{|c|}{} & 
        \multicolumn{3}{c|}{\cellcolor[gray]{0.9}$\mathbf{S}$} \\
        \cline{3-5}
        \multicolumn{2}{|c|}{} & $\varepsilon$ & $\ton$ & $\toff\ton$ \\
        \hline
        \cellcolor[gray]{0.9} & $\rouge[\varepsilon]$ & 
        $\fbox{\scriptsize{1}}$ & $\fbox{\scriptsize{0}}$ & 
        $\fbox{\scriptsize{1}}$ \\  
        \cellcolor[gray]{0.9} & $\toff$ & 
        $\fbox{\scriptsize{0}}$ & $\fbox{\scriptsize{1}}$ & 
        $\fbox{\scriptsize{0}}$ \\
        \multirow{-3}{*}{\cellcolor[gray]{0.9}$\mathbf{P}$} & 
        $\rouge[\ton]$ & $\fbox{\scriptsize{0}}$ & $1$ & $0$ \\
        \hline
        \cellcolor[gray]{0.9} & $\gris[\toff\toff]$ & 
        $\gris[\square]$ & $\gris[\square]$ & $\gris[\square]$ \\
        \cellcolor[gray]{0.9} & $\gris[\toff\ton]$ & 
        $\gris[\square]$ & $\gris[\square]$ & $\gris[0]$ \\
        \cellcolor[gray]{0.9} & $\ton\toff$ & 
        $\fbox{\scriptsize{1}}$ & $0$ & $1$ \\
        \multirow{-4}{*}{\cellcolor[gray]{0.9}$\mathbf{P \cdot \Sigma}$}
        & $\ton\ton$ & $\fbox{\scriptsize{1}}$ & $\fbox{\scriptsize{0}}$ & 
        $\fbox{\scriptsize{1}}$ \\
        \hline
      \end{tabular}}
    \end{minipage}
  \end{center}
\end{example}

\subsection{Analyzing simple counterexamples}
\label{sec:ce_handling}
  
Under incomplete teachers, the evaluation predicate introduced in 
Section~\ref{sec:aal} may return incomplete answers.
Thus, the existence of a breaking point is no longer guaranteed—as shown in 
Example~\ref{ex:equidist_sce}—and incomplete learning algorithms revert to 
adding the entire set of prefixes~\cite{Grinchtein06,NeiderJ13,Neider14} 
(resp. suffixes~\cite{MWSKF23}) of a counterexample $w$ to $P$ (resp. $S$)
in order to induce a refinement.

We introduce a counterexample analysis algorithm that can reduce the 
number of suffixes added to $S$, in a fashion similar to RS under the MAT 
framework.
To this end, the evaluation predicate must account for incomplete answers and
membership assumptions made by model $m$ while assigning values to $\square$ 
cells:
  
\begin{definition}
  Given a counterexample $w \in \Sigma^*$ to $\hypm$ and
  $i \in \interv{0}{|w|}$, the \defn{incomplete evaluation predicate} is
  \begin{enumerate*}
    \item $\agree{w}{i} = \MEM(\pcomp[\obstab_m]{w}{i})$ if
    $\MEM(\pcomp[\obstab_m]{w}{i}) \in \{0, 1\}$, else
    \item $\agree{w}{i} = m[x_{\pcomp[\obstab_m]{w}{i}}]$ if 
    $m[x_{\pcomp[\obstab_m]{w}{i}}]$ is defined, else
    \item $\agree{w}{i} = \square$.
  \end{enumerate*}
\end{definition}

While a BP may not exist, some intervals can nevertheless provide $\linda$ with 
a set of new suffixes whose addition to $S$ induces a refinement of the 
hypothesis:
  
\begin{definition}
  A \defn{Breaking Interval} (BI) is an interval $\interv{i}{j}$,
  $i, j \in \interv{0}{|w|}$, $i < j$, such that 
  $\{\agree{w}{i}, \agree{w}{j}\} \subseteq \{0, 1\}$ and $\agree{w}{i} \neq 
  \agree{w}{j}$.
\end{definition}

\begin{property}
  \label{prop:bp_bounds}
  $\{\agree{w}{0}, \agree{w}{|w|}\} \subseteq \{0, 1\}$, 
  $\agree{w}{0} \neq \agree{w}{|w|}$, and $\interv{0}{|w|}$ is a BI. 
\end{property}
  
\begin{example}
  \label{ex:equidist_sce}
  Consider counterexample $w = \toff\ton\toff\toff\toff\ton \in \pre^*(S_b)$ 
  to hypothesis $\hypm$ of Example~\ref{ex:equidist_hyp}.
  Let us perform a RS analysis of $w$:
  we split $w$ into a prefix $w^i$ and a suffix $\suffx{w}{i}$,
  compute the representative $\repr{w^i}{\hypm}$ of $w^i$,
  explicit the partial computation $\pcomp[\obstab_m]{w}{i} = \repr{w^i}{\hypm} \cdot \suffx{w}{i}$, 
  and finally perfom a query $\MEM(\pcomp[\obstab_m]{w}{i})$ to compute $\agree{w}{i}$.
  It results in the following RS table that lacks a BP: 
  
  \begin{center}
  	\noindent
    \scalebox{0.83}{\renewcommand{\arraystretch}{1.1}
	    \setlength{\tabcolsep}{6pt}
	    \begin{tabular}{|c|c|c|c|c|c|c|c|}
	      \hline
	      \cellcolor[gray]{0.9}$i$  & $\bleu[0]$ & $\bleu[1]$ & $\bleu[2]$ & 
	      $\bleu[3]$ & $4$ & $5$ & $6$ \\
	      \hline
	      \cellcolor[gray]{0.9}$w^i$ & $\varepsilon$ & $\toff$ & 
	      $\toff\ton$ & $\toff\ton\toff$ & $\toff\ton\toff\toff$ & $\toff\ton\toff\toff\toff$ & $\toff\ton\toff\toff\toff\ton$ \\
	      \hline
	      \cellcolor[gray]{0.9}$\repr{w^i}{\hypm}$ & $\varepsilon$ & $\ton$ & 
	      $\varepsilon$ & $\ton$ & $\varepsilon$ & $\ton$ & $\varepsilon$ \\
	      \hline
	      \cellcolor[gray]{0.9}$\suffx{w}{i}$ & $\toff\ton\toff\toff\toff\ton$ & 
	      $\rouge[\ton\toff\toff\toff\ton]$ & $\rouge[\toff\toff\toff\ton]$ & 
	      $\rouge[\toff\toff\ton]$ & $\toff\ton$ & $\ton$ & $\varepsilon$ \\
	      \hline
	      \cellcolor[gray]{0.9}$\pcomp[\obstab_m]{w}{i}$ & 
	      $\toff\ton\toff\toff\toff\ton$ & 
	      $\ton\ton\toff\toff\toff\ton$ & $\toff\toff\toff\ton$ & 
	      $\ton\toff\toff\ton$ & $\toff\ton$ & $\ton\ton$ & $\varepsilon$ \\
	      \hline
	      \cellcolor[gray]{0.9}$\agree{w}{i}$ & $\bleu[0]$ & $\bleu[\square]$ & 
	      $\bleu[\square]$ & $\bleu[1]$ & $\square$ & $1$ & $1$ \\
	      \hline
	    \end{tabular}}
  \end{center}
  
  Yet $\bleu[\interv{0}{3}]$ is a BI.
  We add suffixes 
  $\rouge[\ton\toff\toff\toff\ton]$, $\rouge[\toff\toff\toff\ton]$, and 
  $\rouge[\toff\toff\ton]$ to $S$.
\end{example}

\begin{theorem}
  \label{th:refine_simple_ce}
  Let $m$ be a model for the set of clauses $\clauses_{P, S}$, $w$ a 
  counterexample to $\hypm$, $\interv{i}{j}$ a breaking interval of 
  $w$, and $S' = S \cup \{\suffx{w}{i+1}, \ldots, \suffx{w}{j}\}$.
  Then any model $m'$ subsuming $m$ does not satisfy the updated set of
  clauses $\clauses_{P, S'}$.
\end{theorem}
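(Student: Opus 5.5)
We argue by contradiction: suppose some model $m'$ subsuming $m$ satisfies $\clauses_{P, S'}$. Since $m'$ extends $m$, it selects the same basis $B$ (the $b_p$ variables keep their values), the same equivalences $e_{p,a,q}$ on the old columns, and the same fillings of the old $\square$ cells. The new columns $\suffx{w}{k}$, $k \in \interv{i+1}{j}$, only add cells and variables, so $\obstab_{m'}$ is a sharp, closed extension of $\obstab_m$ to $S'$. The first step is to check that adding columns cannot merge rows and that the clauses force the new frontier rows to keep their representatives. Since $m'$ fixes every $e_{p,a,q}$ from $m$, the new table must still satisfy $\repr{p a}{\obstab_{m'}} = \repr{p a}{\obstab_m}$. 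Hence $\obstab_{m'}$ induces the same transition function, and $\repr{u}{\obstab_{m'}} = \repr{u}{\obstab_m}$ for every $u \in \Sigma^*$. This relies on the encoding in \appref{A}{app:sat_clauses}, where the $e$ variables imply row equality on all of $S'$.

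The core step is a chaining argument along the interval. For each $k \in \interv{i}{j-1}$, write $p_k = \repr{w^k}{\obstab_m} \in B$ and $a = w[k]$. By construction $p_{k+1} = \delta_{\obstab_m}(p_k, a)$, so $m'$ forces rows $p_k a$ and $p_{k+1}$ to agree on column $\suffx{w}{k+1} \in S'$. Now $p_k \cdot a \cdot \suffx{w}{k+1} = \pcomp{w}{k}$ and $p_{k+1} \cdot \suffx{w}{k+1} = \pcomp{w}{k+1}$. Thus the value $\obstab_{m'}$ assigns to cell $\pcomp{w}{k}$ equals the value it assigns to $\pcomp{w}{k+1}$. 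Chaining over $k = i, \dots, j-1$ shows that $\obstab_{m'}$ assigns equal values to $\pcomp{w}{i}$ and $\pcomp{w}{j}$.

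These cells are words, and a given word may appear as several cells (as in Example~\ref{ex:equidist_unsat}). I rely on the encoding using a single variable $x_u$ per incomplete word $u$, or on the known value $\MEM(u)$ when it is complete. Each cell value therefore equals the membership value the model attributes to the word.

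It remains to show these values are $\agree{w}{i}$ and $\agree{w}{j}$, which differ by the definition of a BI. If $\MEM(\pcomp{w}{k}) \in \{0,1\}$, the cell is fixed to that value. Otherwise $\agree{w}{k} = m[x_{\pcomp{w}{k}}]$, which is defined since $\agree{w}{k} \neq \square$ at the BI endpoints, and $m'$ agrees with it because it subsumes $m$. This yields a contradiction.

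The endpoint cases need care. At $k = 0$ we have $\pcomp{w}{0} = w$, and $\suffx{w}{0} = w$ may not be in $S'$; but the chain only uses the columns $\suffx{w}{k+1}$, which lie in $S'$, so this is fine. At $k = |w|$ the cell is $p \cdot \varepsilon$, and $\varepsilon \in S$. I expect the main obstacle to be the step requiring precise knowledge of the clauses $\clauses_{P, S'}$: that the $e_{p,a,q}$ variables enforce equality on the new columns, and that cells spelling the same word share a variable. I would verify both directly against the clause definitions in the appendix.
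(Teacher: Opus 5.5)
Your proof is correct and is essentially the paper's argument. The paper notes that under $m'$ every $\agree{w}{k}$ with $k \in \interv{i}{j}$ lies in $\{0,1\}$, extracts a breaking point $k$ from $\agree{w}{i} \neq \agree{w}{j}$, and obtains a contradiction from the congruence clause for $(\repr{w^k}{}, w[k], \repr{w^{k+1}}{}, \suffx{w}{k+1})$ together with $m'[e_{\repr{w^k}{}, w[k], \repr{w^{k+1}}{}}] = 1$; you instead chain those same congruence clauses from $i$ to $j$, which is just the contrapositive form of that step (and the side condition $p' \neq pa$ on the congruence clauses is harmless, since when $\repr{w^{k+1}}{} = \repr{w^k}{}\, w[k]$ the two cells spell the same word).
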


Intuitively, assume that there exists an OT $\obstab_{m'}$ synthesized from a 
model $m'$ of $\clauses_{P, S'}$ that shares $\obstab_m$'s basis and membership 
assumptions—resulting in the same hypothesis being generated and $w$ still 
being a counterexample.
By definition of $S'$, partial computations
$\pcomp[\obstab_{m'}]{w}{k} = \repr{w^k}{\obstab_{m'}} \cdot \suffx{w}{k}$ 
belong to $\obstab_{m'}$'s cells for any $k \in \interv{i+1}{j}$.
Therefore, $m'$ must assign a value to predicate $\agree{w}{k}$.
But, due to $\agree{w}{i} \neq \agree{w}{j}$,
$\interv{i}{j}$ must contain a BP that causes a closure defect invalidating 
$m' \vDash \clauses_{P, S'}$.
By contradiction, no such $m'$ exists.

Property~\ref{prop:bp_bounds} guarantees a BI always exists.
Its proof and that of Theorem~\ref{th:refine_simple_ce} are detailed in
\appref{B.1}{prf:bp_bounds} and \appref{B.2}{prf:refine_simple_ce}.
Finding a BI requires at most $\calO(|w|)$ MQs, a weaker 
result than the $\bigo(\log|w|)$ MQs needed under the MAT framework:
binary search is no longer an efficient option due to $\square$ answers.
  
\paragraph{Use of simple counterexamples in $\linda$.}
Given a simple counterexample $w$ to hypothesis $\hypm$, we find a BI 
$\interv{i}{j}$ and add $\{\suffx{w}{i+1}, \ldots, \suffx{w}{j}\}$ to $S$,
resulting in new columns being inserted in the IOT $\obstab$.
The set of clauses $\clauses_{P, S}$ is then updated \defn{incrementally}:
new clauses are added to further constrain the current instance of the SAT 
solver without resetting it.
Depending on how MQs are carried out, we may be looking for the shortest 
BI or for one that maximizes $i$ (hence, minimizes the length 
of the longest suffix added).
We experimentally evaluate these two options in Section~\ref{sec:exp},
in particular in the context of RMC.

\subsection{Analysing inductive counterexamples}
\label{sec:ice_handling}

We briefly show how to handle inductive counterexamples to hypothesis $\hypm$
of the form $w = (w_1, w_2)$ in a similar fashion to simple counterexamples, 
inducing a refinement of $\hypm$ by \emph{frugally} expanding $S$.
Intuitively, we introduce a \defn{two-dimensional} incomplete evaluation 
predicate $\agreetd{w}{i}{j}$, querying the teacher and the 
model to determine whether $(\pcomp{w_1}{i}, \pcomp{w_2}{j})$ is an inductive 
pair for $(i, j) \in \interv{0}{|w_1|} \times \interv{0}{|w_2|}$.
We look for a \defn{breaking rectangle} (BR)
$\interv{i}{i'} \times \interv{j}{j'}$ s.t.
$\{\agreetd{w}{i}{j}, \agreetd{w}{i'}{j'}\} \subseteq \{0, 1\}$, and
$\agreetd{w}{i}{j} \neq \agreetd{w}{i'}{j'}$,
from which we deduce which suffixes of $w_1$ and $w_2$ to add to 
$S$.
  
\begin{theorem}
  \label{th:refine_inductive_ce}
  Let $m$ be a model for the set of clauses $\clauses_{P, S}$, $w = (w_1, 
  w_2)$ an inductive counterexample to $\hypm$,
  $\interv{i}{i'} \times \interv{j}{j'}$ a breaking rectangle,
  and $S' = S \cup {\{\suffx{w_1}{i}, \ldots, 
    \suffx{w_1}{i\nospace{'}}\} \cup \{\suffx{w_2}{j}, \ldots, 
    \suffx{w_2}{j\nospace{'}}\}}$.
  Then any model $m'$ subsuming $m$ does not satisfy the updated set
  of clauses $\clauses_{P, S'}$.
\end{theorem}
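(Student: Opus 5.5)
Assume for contradiction that some model $m'$ subsuming $m$ satisfies $\clauses_{P, S'}$. I will follow the same skeleton as the sketch given for Theorem~\ref{th:refine_simple_ce}.

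\textbf{Step 1: $m'$ reproduces $m$'s hypothesis.} Since $m'$ subsumes $m$, it picks the same basis $B$ (the variables $b_p$ agree) and the same equivalences $e_{p,a,q}$. It also assigns the same values to all old $\square$ cells. Hence $\obstab_{m'}$ restricted to the old columns is $\obstab_m$, and $\repr{u}{\obstab_{m'}} = \repr{u}{\obstab_m}$ for every $u \in \Sigma^*$. In particular $\hyp_{m'} = \hypm$, so $w_1 \in \hyp_{m'}$ and $w_2 \notin \hyp_{m'}$.

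\textbf{Step 2: interpret $\alpha_w$ in $\obstab_{m'}$.} I first make the two-dimensional predicate precise. At a point $(k,l)$, it evaluates the implication $\pcomp{w_1}{k} \in T \rightarrow \pcomp{w_2}{l} \in T$ using the teacher answers and the model, returning $\square$ when undetermined. The boundary value at $(i,j)$ or $(i',j')$ is then fixed by $w_1 \in \hypm$, $w_2 \notin \hypm$. Consequently $\alpha_w = 0$ is witnessed by values of $\pcomp{w_1}{k}$ and $\pcomp{w_2}{l}$ that $m'$ must reproduce.

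\textbf{Step 3: every intermediate cell is pinned down.} For each $k \in \interv{i}{i'}$, the partial computation $\pcomp{w_1}{k} = \repr{w_1^k}{} \cdot \suffx{w_1}{k}$ has prefix $\repr{w_1^k}{} \in B$ and suffix in $S'$, so it is a cell of the new IOT. The same holds for $\pcomp{w_2}{l}$ with $l \in \interv{j}{j'}$. The model $m'$ therefore determines Boolean values $\beta_1(k) = \hyp(\pcomp{w_1}{k})$ and $\beta_2(l)$ for all these indices. These values are consistent with the teacher's $0/1$ answers and with every inductive pair in $\indpairs$, via the clauses $x_{u} \rightarrow x_{v}$. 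I must check that the pairs the teacher returned, which were queried with hint sets, are indeed recorded in $\indpairs$ once the new cells exist. This follows from Definition~\ref{def:indpairs} recomputed over the enlarged $U$. It also uses Definition~\ref{def:eqindoracle}: $(w_1,w_2) \in \MEM(w_1,\{w_2\})$, lifted through the pcomps by the consistency at the corners.

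\textbf{Step 4: locate a breaking point and derive the contradiction.} Consider the sequences $\beta_1$ on $\interv{i}{i'}$ and $\beta_2$ on $\interv{j}{j'}$. The corner values differ, so either $\beta_1$ changes value between consecutive $k, k+1$, or $\beta_2$ changes between consecutive $l, l+1$. In the first case, the suffix $\suffx{w_1}{k+1} \in S'$ distinguishes the row $\repr{w_1^k}{}\cdot w_1[k]$ from $\repr{w_1^{k+1}}{}$ in $\obstab_{m'}$. Yet the variable $e$ inherited from $m$ forces these two rows to be equivalent on all of $S'$, violating the congruence/determinism clauses. The second case is symmetric.

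\textbf{Main obstacle.} The hard part is Step 2: the corner inequality $\agreetd{w}{i}{j} \neq \agreetd{w}{i'}{j'}$ must translate into a change in $\beta_1$ or $\beta_2$. I would do this by a case analysis on which corner is $0$. A corner equal to $0$ means $m$ or the teacher asserts $\pcomp{w_1}{\cdot} \in T$ and $\pcomp{w_2}{\cdot} \notin T$. A corner equal to $1$ means the implication is forced, for instance by an inductive pair, which $m'$ must respect through its clauses. Comparing the two corners then yields the required change in one coordinate.
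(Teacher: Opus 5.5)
Your proof is correct, but its middle step is organized differently from the paper's. The paper first argues that under $m'$ every $\agreetd{w}{k}{l}$ in the rectangle lies in $\{0,1\}$. It then extracts an adjacent (horizontal or vertical) pair of grid points where $\alpha$ flips. A four-way case analysis (horizontal/vertical, $(1,0)$/$(0,1)$) turns that 2D flip into a flip of $\val_{m'}(\pcomp{w_1}{\cdot})$ or $\val_{m'}(\pcomp{w_2}{\cdot})$ at consecutive indices, after which the $\ccong{\cdot}$ argument of Theorem~\ref{th:refine_simple_ce} applies.

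You skip the 2D search and use the predicate only at the two corners. A corner equal to $0$ forces $\beta_1=1$, $\beta_2=0$ there. A corner equal to $1$ forces $\beta_1\rightarrow\beta_2$ there, either directly or via $\cind{\cdot}$ when it came from a teacher pair; that pair lies in $\indpairs$ because both partial computations are now cells. Write the comparison out, since it is the whole content of your ``main obstacle''. If $\agreetd{w}{i}{j}=0$ and $\agreetd{w}{i'}{j'}=1$, then $\beta_1(i)=1$, $\beta_2(j)=0$ and $\beta_1(i')\rightarrow\beta_2(j')$, so $\beta_1(i')=0$ or $\beta_2(j')=1$. The other orientation is symmetric.

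Your route buys two things:
\begin{itemize}
\item It shows directly that, for any completion $m'$, a breaking rectangle forces a one-dimensional flip along $\interv{i}{i'}$ for $w_1$ or along $\interv{j}{j'}$ for $w_2$. Which one flips may depend on $m'$, which explains why both families of suffixes are added.
\item It needs the assumption that teacher-returned pairs land in $\indpairs$ only at the corners, not at an arbitrary interior point.
\end{itemize}
The paper's route mirrors the one-dimensional breaking-point argument more literally, at the price of the case split.

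A few slips to fix:
\begin{itemize}
\item $\beta_1(k)$ must be $\val_{m'}(\pcomp{w_1}{k})$, the value the teacher or $m'$ gives that cell, not $\hypm(\pcomp{w_1}{k})$. The hypothesis accepts $\pcomp{w_1}{k}$ exactly when it accepts $w_1$, so the literal reading makes $\beta_1$ constant and Step~4 collapses.
\item In Step~2, the corners of an arbitrary breaking rectangle are not fixed by $w_1\in\hypm$, $w_2\notin\hypm$; that only determines $\agreetd{w}{0}{0}$ and $\agreetd{w}{|w_1|}{|w_2|}$.
\item In Step~3, the pair $(w_1,w_2)$ does not ``lift through the pcomps''. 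Neither this remark nor the previous one is needed, so drop them.
\item The clause violated in Step~4 is $\ccong{\cdot}$, not determinism. Its side condition $p'\neq pa$ holds because a flip between $k$ and $k+1$ forces $\repr{w_1^{k+1}}{}\neq\repr{w_1^k}{}\cdot w_1[k]$.
\end{itemize}
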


We refer the reader to \appref{C}{app:inductive_cex}
for formal definitions and proofs.
Finding a suitable BR requires at most
$\calO(|w_1| \times |w_2|)$ MQs.
  
\begin{example}
  Example~\ref{ex:equidist_hyp} featured an inductive hypothesis s.t. 
  $\post(H) \subseteq H$.
  \begin{center}
    {\begin{minipage}{0.56\linewidth}    
      However, its IOT can also yield another hypothesis with three states
      $\rouge[B = \{\varepsilon, \toff, \ton\}]$
      if we choose to ``fill in the blanks'' differently:
      \begin{center}
        \begin{tikzpicture}[x=2.75cm, y=1.5cm]
          \node[state, initial] (q0) at (0,0) {$\varepsilon$};
          \node[state, accepting] (q1) at (1,0) {$\toff$};
          \node[state, accepting] (q2) at (1,-1) {$\ton$};
          
          \path (q0) edge[->] node[below] {$\toff$} (q1);
          \path (q0) edge[->] node[below left] {$\ton$} (q2);
          \path (q1) edge[->, bend right] node[below] {$\toff, \ton$} (q0);
          \path (q2) edge[->, loop right] node[right] {$\ton$} (q2);
          \path (q2) edge[->] node[right] {$\toff$} (q1);
        \end{tikzpicture}
      \end{center}
    \end{minipage}
    \hfill
    \begin{minipage}{0.4\linewidth}
      {\renewcommand{\arraystretch}{1.1}
      \setlength{\tabcolsep}{6pt}
      \begin{tabular}{|c|c|ccc|}
        \hline
        \multicolumn{2}{|c|}{} & 
        \multicolumn{3}{c|}{\cellcolor[gray]{0.9}$\mathbf{S}$} \\
        \cline{3-5}
        \multicolumn{2}{|c|}{} & $\varepsilon$ & $\ton$ & $\toff\ton$ \\
        \hline
        \cellcolor[gray]{0.9} & $\rouge[\varepsilon]$ & 
        $\fbox{\scriptsize{0}}$ & $\fbox{\scriptsize{1}}$ & 
        $\fbox{\scriptsize{0}}$ \\  
        \cellcolor[gray]{0.9} & $\rouge[\toff]$ & 
        $\fbox{\scriptsize{1}}$ & $\fbox{\scriptsize{0}}$ & 
        $\fbox{\scriptsize{1}}$ \\
        \multirow{-3}{*}{\cellcolor[gray]{0.9}$\mathbf{P}$} & 
        $\rouge[\ton]$ & $\fbox{\scriptsize{1}}$ & $1$ & $0$ \\
        \hline
        \cellcolor[gray]{0.9} & $\toff\toff$ & 
        $\fbox{\scriptsize{0}}$ & $\fbox{\scriptsize{1}}$ & 
        $\fbox{\scriptsize{0}}$ \\
        \cellcolor[gray]{0.9} & $\toff\ton$ & 
        $\fbox{\scriptsize{0}}$ & $\fbox{\scriptsize{1}}$ & $0$ \\
        \cellcolor[gray]{0.9} & $\ton\toff$ & 
        $\fbox{\scriptsize{1}}$ & $0$ & $1$ \\
        \multirow{-4}{*}{\cellcolor[gray]{0.9}$\mathbf{P \cdot \Sigma}$}
        & $\ton\ton$ & $\fbox{\scriptsize{1}}$ & 
        $\fbox{\scriptsize{1}}$ & $\fbox{\scriptsize{0}}$ \\
        \hline
      \end{tabular}}
    \end{minipage}}
  
    \bigskip
    {\begin{minipage}{0.5\linewidth}
      {\renewcommand{\arraystretch}{1.1}
      \setlength{\tabcolsep}{6pt}
      \begin{tabular}{|c|c|cccccc|}
        \hline
        \multicolumn{2}{|c|}{\cellcolor[gray]{0.9}} & 
        \multicolumn{6}{c|}{\cellcolor[gray]{0.9}$k$} \\
        \cline{3-8}
        \multicolumn{2}{|c|}{\multirow{-2}{*}{\cellcolor[gray]{0.9}
        $\agreetd{w}{k}{l}$}} & $0$ & $1$ & $2$ & $3$ & $4$ & $\rouge[5]$ \\
        \hline
        \cellcolor[gray]{0.9} & $0$ & $1$ & $1$ & $\square$ & $1$ & $1$ & $1$ 
        \\
        \cellcolor[gray]{0.9} & $\rouge[1]$ & $1$ & $1$ & $\square$ &$1$ & 
        $1$ & $\bleu[1]$ \\
        \cellcolor[gray]{0.9} & $\rouge[2]$ & $\square$ & $\square$ & 
        $\square$ & $\square$ & $\square$ & $\bleu[\square]$ \\
        \cellcolor[gray]{0.9} & $\rouge[3]$ & $\square$ & $\square$ & 
        $\square$ & $\square$ & $\square$ & $\bleu[\square]$ \\
        \cellcolor[gray]{0.9} & $\rouge[4]$ & $0$ & $0$ & $\square$ & $0$ & 
        $0$ & $\bleu[0]$ \\
        \multirow{-6}{*}{\cellcolor[gray]{0.9}$l$} & $5$ & $0$ & $0$ 
        & $\square$ & $0$ & $0$ & $0$ \\
        \hline
      \end{tabular}}
    \end{minipage}
    \hfill
    \begin{minipage}{0.46\linewidth}
      This hypothesis is not inductive: $(w_1, w_2) = 
      (\ton\toff\toff\ton\toff, \toff\ton\ton\toff\toff)$ is an inductive pair
      by Example~\ref{exple:equidist_indpair}, yet $w_1 \in H$ and
      $w_2 \not\in H$.
      It yields the 2D RS table to the left.
      \bleu[Rectangle] $\rouge[{[5] \times \interv{1}{4}}]$ is breaking;
      thus, we add suffixes $\{\suffx{w_1}{5}, \suffx{w_2}{1}, 
      \suffx{w_2}{2}, \suffx{w_2}{3}, \suffx{w_2}{4}\} = \{\varepsilon, 
      \ton\ton\toff\toff, \ton\toff\toff, \toff\toff, \toff\}$ to $S$.
    \end{minipage}}
  \end{center}
\end{example}

\paragraph{Use of inductive counterexamples in $\linda$.}
Given an inductive counterexample $(w_1, w_2)$ to hypothesis $\hypm$, we find 
a breaking rectangle $\interv{i}{i'} \times \interv{j}{j'}$
then update $S$ (hence the IOT $\obstab$ as well) with
${\{\suffx{w_1}{i}, \ldots, \suffx{w_1}{i\nospace{'}}\} \cup 
\{\suffx{w_2}{j}, \ldots, \suffx{w_2}{j\nospace{'}}\}}$.
Again, the resulting new clauses are added to the solver incrementally.
We may be looking for the smallest rectangle (in terms of area) or one that
maximizes $(i, j)$.

\subsection{Termination and correctness.}
  
Under the MAT framework for a unique target $L \in \Rat(\Sigma)$,
$\lstar$ maintains a set $P$ of words that are pairwise distinguished
under $\lequiv$.
This invariant guarantees termination and correctness:
to fix a closure defect, one adds a new element to $P$;
the algorithm ends when $P$ is a state cover of $L$.

However, $\linda$'s target class may contain more than one regular language.
Nevertheless, we can prove that for any $T \in \tgt$,
there exists a subset $B \subseteq P$ of words pairwise distinguished under 
${\equiv}_T$ that grows with core-induced refinements,
guaranteeing that targets of index $n+1$ can be learnt after at most $n$ 
refinements.
For any variable $v$ appearing in the algorithm,
let $\iter{v}{n}$ denote its value after the $n$-th UNSAT core analysis.
  
\begin{lemma}[\appref{D.1}{prf:progress}]
\label{lem:progress}
  Let $T \in \tgt$.
  Assume that $\linda$ has not ended after $n \in \integers$ UNSAT core 
  refinements.
  Then there exists a prefix-closed set $B_n \subseteq \iter{P}{n}$ 
  of size $n+1$ such that $\forall p, q \in B_n$,
  if $p \neq q$ then $p \not\equiv_T q$.
\end{lemma}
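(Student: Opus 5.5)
Fix $T \in \tgt$. We argue by induction on $n$, constructing $B_{n+1}$ from $B_n$ by adding exactly one word of the form $p \cdot a$ with $p \in B_n$, $a \in \Sigma$. This keeps prefix-closedness automatic. For the base case $n=0$, take $B_0 = \{\varepsilon\} \subseteq \iter{P}{0}$; it is prefix-closed and the distinctness condition is vacuous.

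For the inductive step, assume $B_n \subseteq \iter{P}{n}$ is prefix-closed, of size $n+1$, and pairwise $\equiv_T$-inequivalent, and that the $(n+1)$-th core analysis occurs, i.e.\ $\clauses_{\iter{P}{n},\iter{S}{n}}$ (after all counterexample updates in the inner loop) is UNSAT. Build a candidate valuation $m_T$ from $T$ itself. Set $b_p = 1$ iff $p \in B_n$. Set $x_w = 1$ iff $w \in T$ for every $\square$ cell $w$; this is consistent with all known cells, since $\MEM$ answers $0/1$ only when all targets agree. It also satisfies every inductive-pair clause $x_{w_1}\rightarrow x_{w_2}$, since pairs returned by $\MEM$ hold for every target. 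Sharpness clauses on the basis hold: if $p\neq q$ in $B_n$ had identical rows under $m_T$, then \ldots\ wait, rows being identical on $S$ does not contradict $p\not\equiv_T q$. So sharpness under $m_T$ is the delicate point. I would instead choose $B_n$ with the stronger invariant that its elements are pairwise distinguished by suffixes in $\iter{S}{n}$ \emph{with respect to $T$}. This invariant is preserved, because $S$ only grows, and it implies $\equiv_T$-distinctness.

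With the basis and $x$-values fixed from $T$, choose $e_{p,a,q}$ true iff the row of $p\cdot a$ (valued by $T$) equals the row of $q \in B_n$. All clauses except possibly some $\cclos{p,a}$ with $p \in B_n$ are then satisfied. This relies on the exact clause set from \appref{A}{app:sat_clauses}: congruence/determinism clauses are consistent because rows are computed from one fixed language. Since the instance is UNSAT, some closure clause fails. The paper's claim that the core always contains a closure clause must be sharpened here, and this is the \textbf{main obstacle}. We need that some $\cclos{p,a}$ falsified by $m_T$, with $p\in B_n$, actually lies in the returned core. The argument I would try is as follows. Removing all closure clauses violated by $m_T$ yields a satisfiable set, since $m_T$ satisfies it. Hence every UNSAT core must contain at least one clause violated by $m_T$, and by construction every such clause is a $\cclos{p,a}$ with $p \in B_n$.

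For such a $(p,a)$, the row of $p\cdot a$ under $T$ on $\iter{S}{n}$ differs from every $q\in B_n$. Algorithm~\ref{algo:lice} adds $p\cdot a$ to $\iter{P}{n+1}$, so set $B_{n+1} = B_n \cup \{p\cdot a\}$. Then $p\cdot a\notin B_n$, because a member of $B_n$ would match its own row. Thus $|B_{n+1}| = n+2$, $B_{n+1}$ is prefix-closed since $p\in B_n$, and its elements are pairwise separated by suffixes of $\iter{S}{n}\subseteq\iter{S}{n+1}$ relative to $T$. Hence they are pairwise $\not\equiv_T$, which proves the lemma.
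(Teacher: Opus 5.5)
Your proof is correct and follows the paper's overall structure. It proceeds by induction on $n$ and uses a valuation induced by $B_n$ and $T$ that can only fail on closure clauses. The core must therefore contain some $\cclos{p,a}$ with $p \in B_n$ and $pa$ new. The difference is in how the $e$ variables are set.

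The paper argues by contradiction. For each $(p,a)$ whose closure clause lies in the core, it picks a representative $q_{pa} \in B_n$ with $q_{pa} \equiv_T pa$, and checks that this core-dependent model satisfies the core. You instead set $e_{p,a,q}$ by equality of $T$-valued rows on the current $S$. You then argue directly that any unsatisfiable clause set must contain a clause falsified by $m_T$.

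Defining $e$ by row matching is what forces your strengthened invariant: elements of $B_n$ are pairwise separated with respect to $T$ by suffixes already in $S$. That invariant, not the fact that all rows come from one language, is what makes $\cdet{p,a,p_1,p_2}$ and $\csharp{p,a,p'}$ hold.

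What your route buys:
\begin{itemize}
\item a slightly stronger conclusion, namely that the $T$-filled rows of $B_n$ are pairwise distinct;
\item a valuation independent of the core, so you only use that the returned clause set is unsatisfiable, not that it is minimal.
\end{itemize}
By contrast, the paper's default $q_{pa} = \varepsilon$ makes $e_{p,a,\varepsilon}$ true. Its justification of the $\mathtt{congr}$ case does not cover a core clause $\ccong{p,a,\varepsilon,s}$. This is rescued either by minimality of the core (a pure-literal argument on $e_{p,a,\varepsilon}$) or by simply leaving those $e$ variables false.

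What the paper's route buys is working with $\equiv_T$ directly, with no invariant strengthening. Your sharpness worry does not arise there, because $\csharp{p,a,p'}$ only constrains $e$ variables, not row contents.

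A minor point: the rows should be read on the suffix set present at the $(n+1)$-th core, i.e.\ $\iter{S}{n+1}$, rather than $\iter{S}{n}$. This is harmless, since $S$ only grows.
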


\begin{theorem}[\appref{D.2}{prf:termination}]
  \label{thm:termination}
  Assume $\tgt$ contains a language $T \in \tgt$ s.t. $\equiv_T$ has index
  $n \in \integers \setminus \{0\}$.
  Then $\linda$ terminates after at most $n-1$ UNSAT core refinements.
\end{theorem}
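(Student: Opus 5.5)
The plan is a proof by contradiction that rests on Lemma~\ref{lem:progress}. Fix $T \in \tgt$ such that $\equiv_T$ has index $n$. Suppose $\linda$ has not ended after $n$ UNSAT core refinements. Applying Lemma~\ref{lem:progress} with $n$ refinements gives a prefix-closed set $B_n \subseteq \iter{P}{n}$ with $n+1$ elements whose members are pairwise non-equivalent under $\equiv_T$. These $n+1$ words then lie in $n+1$ distinct classes of $\Sigma^*/\equiv_T$, which has only $n$ classes. This contradicts the pigeonhole principle. Hence at most $n-1$ UNSAT core refinements can be completed while the algorithm is still running.

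The bound is meant in the following sense: the $n$-th core refinement never happens. Once the algorithm has performed $n-1$ refinements, the next UNSAT call cannot occur. The lemma, read contrapositively with index $n$, says that ``$\linda$ not ended after $n$ refinements'' is impossible. So when $\linda$ has done $n-1$ refinements, either it returns during the current outer iteration, or it reaches the $n$-th core analysis. The second case would make the hypothesis of the lemma at step $n$ hold, which we have just ruled out.

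I must also make sure each outer iteration is finite, since otherwise ``terminates'' does not follow from bounding the refinements. Between two core analyses, the inner while loop is driven by counterexamples. By Theorems~\ref{th:refine_simple_ce} and~\ref{th:refine_inductive_ce}, each counterexample excludes every model subsuming the current $m$. With $P$ fixed, I would argue finiteness in two steps.
\begin{itemize}
\item The choices of basis and of filling for the finite table over $P \cup P\Sigma$ are finitely many.
\item Adding suffixes can only shrink the set of admissible (basis, assignment on old cells) pairs, and each counterexample removes at least the current one.
\end{itemize}
So each inner loop ends after finitely many iterations. It ends either by returning a hypothesis or by reaching UNSAT, which triggers a core refinement.

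I expect this last point to be the main obstacle, because new suffixes introduce new variables. The fix is to project models onto the variables that existed before the counterexample: Theorem~\ref{th:refine_simple_ce} forbids any extension of the old projection, and the old variable set is finite for fixed $P$. If this projection argument does not go through cleanly, I would fall back on the weaker reading that the statement only bounds the number of core refinements, which follows immediately from Lemma~\ref{lem:progress} as described in the first paragraph.
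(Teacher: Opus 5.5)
Your first paragraph is exactly the paper's proof. If an $n$-th core refinement took place, Lemma~\ref{lem:progress} would give $n+1$ pairwise $\equiv_T$-inequivalent words in $\iter{P}{n}$, contradicting the index $n$ of $\equiv_T$. The paper proves nothing beyond this bound on the number of core refinements, so your ``fallback'' reading is precisely how the paper reads the theorem.

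Your additional argument that each inner \texttt{while} loop is finite is not in the paper, and as sketched it does not go through. For fixed $P$ the set of variables is not finite across the loop. Every counterexample enlarges $S$ and so creates fresh $x_w$ variables; the ``finite table'' has an unbounded number of columns. Theorem~\ref{th:refine_simple_ce} excludes only extensions of the whole current model $m_k$, not extensions of its projection onto some fixed earlier variable set. So no single finite set of admissible valuations strictly shrinks, and partial valuations on growing domains, none extending an earlier one, can form an infinite sequence.

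A working argument would instead use three facts:
\begin{enumerate}
\item $\hypm$ depends only on the finitely many variables $b_p$, $e_{p,a,p'}$ and on the values $\val_m(p)$ for $p \in P$, so only finitely many hypotheses occur while $P$ is fixed.
\item $\VAL$ is a function, so a repeated hypothesis receives the same counterexample $w$.
\item Each analysis of $w$ must add a suffix of $w$ not yet in $S$. Otherwise Theorem~\ref{th:refine_simple_ce} with $S' = S$ would contradict $m \vDash \clauses_{P,S}$ itself.
\end{enumerate}
Hence each hypothesis can be proposed only finitely many times. The same reasoning applies to inductive counterexamples via Theorem~\ref{th:refine_inductive_ce}.
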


Core-based refinements may add more than one prefix:
hence, a target with $n+1$ states might require less than $n$ refinements to be
learnt.
In the worst case scenario, the algorithm may add up to $|P| \times |\Sigma|$
prefixes each refinement, leading to an exponential number of prefixes.
This is less likely to happen if some structure of $\tgt$ can be inferred from 
MQs—that is, if the teacher does not always return incomplete answers.

\section{Experiments}
\label{sec:exp}

We implemented~\footnote{\url{https://gitlab.lre.epita.fr/aa/string-chc-lib}}
{\linda} in Python using the PySAT 
\texttt{1.8.dev16}~\cite{imms-sat18,itk-sat24} library, as it allows us to tap
into the incremental SAT solver CaDiCaL \texttt{1.95}~\cite{Biere24}, a choice
motivated by the results of the
SAT competition~\cite{SATComp23}.
Our library also features basic automata manipulation, including
language separation and RMC-related operations for the \idmat{} implementation.

Each run is performed on an Ubuntu 24.04 LTS virtual machine with 8
vCPU and 16GB of RAM.
A run is considered failed when it timeouts after
$10$ minutes, or when it uses more than $200$~MB of RAM.
Since SAT solving and
UNSAT core extraction are non-deterministic processes, we resort to running
every experiment $3$ times then consider the best-of-three (Bo3) runtime and 
query complexity.

These experiments aim at addressing the following research questions:
\begin{description}
  \item[\Rq{1}] What is the impact of using a single SAT solver without 
  forking or maintaining a learner queue—that is, using {\linda} instead of 
  $L^*_\square$?
  \item[\Rq{2}] Does the introduction of inductive queries results in more 
  instances being solved within the allotted time-frame?
  \item[\Rq{3}] What is the most efficient way to introduce new suffixes during
  counterexample analysis?
\end{description}

\subsection{Language separation benchmark}
    
To address \Rq{1}, we consider the Nerode tool from~\cite{MWSKF23}
that implements the $L^*_\square$ algorithm mentioned in Section~\ref{sec:related}.
Nerode is compiled from its
repository~\footnote{\url{https://github.com/cornell-pl/nerode-public/commit/9d32ef89513297554db5aa6811d977acfc96e919}}
and ran with CLI options \texttt{-s -pq} on the same hardware,
with the same time and memory constraints.
The benchmark set does not consider inductive queries as they are
not supported by Nerode and focus instead here on the language separation 
problem of
Oliveira and al.~\cite{OliveiraS01} as formatted by~\cite{MWSKF23-artifact}.

Individual instances running times are reported comparatively in
Figure~\ref{fig:runsepvs}:
Nerode and \linda{} solved 96\% and 98\% of presented instances with
an average runtime of 24s and 6s, respectively.
We therefore conclude that the use of {\linda} is beneficial to the
overall learning time. We can also remark that the improvement could be even
greater as \linda{} is programmed in an interpreted language
while Nerode is written and compiled in OCaml.

\subsection{Inductive queries: an ablation study}
\label{sec:exp_indq}

To answer \Rq{2}, we now consider 28 RMC models from two different sources:
parameterized protocols from the RMC benchmark from \cite{ChenHLR17},
in a custom automata / transducer text file; and
handcrafted token passing protocols, in Python format, inspired
by Example~\ref{ex:equidist}.
    
As a baseline, we consider a \texttt{strict} (resp. \texttt{non-strict}) 
teacher—as defined in Section~\ref{sec:related}—that targets
$\post^*(S_0)$ (resp. $\pre^*(S_b)$). 
This setting did not allow us to achieve  
comparable~\footnote{We conjecture that this is due
to our Python implementation that exhaustively computes
$\post^*(S_0 \cap \Sigma^{n})$ and $\pre^*(S_b \cap \Sigma^n)$.}
performances to \cite{ChenHLR17};
thus, we perform an ablation study on \linda{} instead.
Table~\ref{tab:summary} displays the number of the instances solved and the 
Bo3 runtime / query complexity of the various AAL algorithms for the models 
they can solve within the allotted time-frame.
It shows that \linda{} instantiated with an inductive teacher is able to solve 
more models—with the RS optimization, all solvable models save one—than
\texttt{strict} and \texttt{non-strict} teachers thanks to the expressivity of 
the IdMAT framework, although it comes at the expense of runtime and query 
complexity compared to \texttt{non-strict}.

\subsection{Rivest-Schapire-like (RS) counterexample analysis \Rq{3}.}

We perform another ablation study on the dataset previously introduced in
Section~\ref{sec:exp_indq};
we compare the refinement method that consists in adding every suffix of a 
counterexample to $S$—in a similar fashion to $L^*_\square$~\cite{MWSKF23}—to 
our adaptation of the RS analysis technique under the {\idmat} framework
detailed in Sections~\ref{sec:ce_handling} and \ref{sec:ice_handling}.
Based on the type of BIs and BRs selected, two variants are considered:
do we seek to minimize the number of new suffixes added to $S$
(option \texttt{small}), or the length of the longest suffix added (option 
\texttt{short})?

Figure~\ref{fig:rsornotrs} compares the runtime of both RS variants with a 
RS-free execution of {\linda} and shows an overall improvement.
Table~\ref{tab:summary} also highlights a slight increase in the number of 
models solved when either variant of RS is enabled.
Variant \texttt{small} is shown to be faster than \texttt{short}:
in particular, it performs a significantly smaller amount of MQs.

\begin{figure}
  \centering
  \begin{subfigure}[ht]{0.46\textwidth}
    \begin{tikzpicture}
      \begin{axis}[
        ylabel={Nerode Avg. Runtime (s)},
        xlabel={\linda{} Avg. Runtime (s)},
               grid=major,
               legend pos=north west,
               ymode=log,
               xmode=log,
               ymin=0.1,
        xmax=600,
        ymax=600,
        width=\textwidth,
      ]
        \addplot[only marks, mark=x,blue] table[y expr={max(0.1,\thisrow{NerodeAvgTime})},
        x expr={max(0.1,\thisrow{LindaAvgTime})}, col sep=comma] {data/sep_vs.csv};
        \addplot[domain=0.001:600, dashed] {x};
      \end{axis}
    \end{tikzpicture}
    \caption{Language separation benchmark}
    \label{fig:runsepvs}
  \end{subfigure}
  \begin{subfigure}[ht]{0.46\textwidth}
    \begin{tikzpicture}
      \begin{axis}[
        xlabel={Runtime with RS (s)},
        ylabel={Runtime without RS (s)},
               grid=major,
               legend pos=south east,
               width=\textwidth,
               ymode=log,
               xmode=log,
      ]
        \addplot[only marks, mark=x, red] table[x=TimeRS, y=TimeNoRS, col sep=comma] {data/rmc.csv};
        \addlegendentry{\texttt{short}}
        \addplot[only marks, mark=o, blue] table[x=TimeNoshorten, y=TimeNoRS, col sep=comma] {data/rmc.csv};
        \addlegendentry{\texttt{small}}
        \addplot[domain=0.001:600, dashed] {x};
      \end{axis}
    \end{tikzpicture}
    \caption{Rivest-Shapire ablation study.}
    \label{fig:rsornotrs}
  \end{subfigure}
  \label{fig:runtimes}
  \caption{Runtime analysis and average values.}
\end{figure}

\begin{table}
  \begin{center}
    \renewcommand{\arraystretch}{1.25}
\scalebox{0.92}{\begin{tabular}{|c|c|c|c|c|c|c|c|}
\hline
& \linda{} \texttt{small}
& \linda{} w.o. RS
& \linda{} \texttt{short}
& \texttt{strict}
& \texttt{non-strict}
& Common
& V. Best
\\\hline
\#Models  
& 23
& 22
& 23
& 17
& 18
& 13
& 24
\\\hline
Solved freq.
& 0.79
& 0.75
& 0.80
& 0.61
& 0.64
& - 
& - 
\\\hline
Avg. Time
& 1.4
& 29.0
& 6.2
& 34.5
& 1.1
& - 
& - 
\\\hline
Avg. \#$\MEM$
& 203.4
& 733.6
& 644.2
& 591.4
& 146.1
& - 
& - 
\\\hline
Avg. \#$\VAL$
& 3.6
& 4.5
& 7.2
& 4.2
& 2.7
& - 
& - 
\\\hline
\end{tabular}}

  \end{center}
  \caption{Number of RMC models solved at least once.
  We average Bo3 query complexity and runtime on the set of common models every 
  algorithm can solve.}
  \label{tab:summary}
\end{table}

\section{Conclusion and Further Developments}

  We devised an inductive learning framework IdMAT that targets classes of 
  languages and subsumes the incomplete learning framework iMAT~\cite{MWSKF23}.
  Under IdMAT, we designed a new AAL algorithm $\linda$ that leverages 
  incremental SAT solving and UNSAT core analysis, as well as a new frugal 
  refinement method inspired by the well-known RS technique~\cite{RS93} that
  we also extended to inductive counterexamples.
  We implemented $\linda$ and benchmarked it in the context of regular language 
  separation and invariant synthesis.
 
  In the footsteps of~\cite{Leucker12,NeiderJ13,MWSKF23}, we based $\linda$ on 
  Angluin's $\lstar$~\cite{Angluin87}.
  However, it remains to be seen whether state-of-the-art active learning
  algorithms such as $L^\#$~\cite{Vaandrager22Lsharp} or
  $L^\lambda$~\cite{Howar22} can be extended to the IdMAT framework.
  The latter algorithm is of great interest to us as it tends 
  to reduce the overall \defn{symbol complexity}—the sum of the
  length of all words that have been queried—of the learning process:
  indeed, given a MQ on a word $w$ of length $n$,
  the oracle must exhaustively compute
  $\post^*({S_0} \cap \Sigma^{n})$ and $\pre^*({S_b} \cap \Sigma^{n})$ (see
  Section~\ref{sec:applications});
  the smaller the $n$, the faster the computation.
  Another promising lead to explore these sets consists in
  relying on GPUs and a matrix-based representation of the finite transducer 
  that encodes the system's transitions.

  Finally, we should also investigate the use of \defn{symbolic} alphabets, 
  as RTS are often partially or fully encoded symbolically~\cite{EsparzaRW22}.

\bibliographystyle{splncs04}
\bibliography{refs}

@article{IdMATTR,
  author       = {Daniel Stan and
                  Adrien Pommellet and
                  Juliette Jacquot},
  title        = {Automata Learning with an Incomplete but Inductive Teacher},
  journal      = {CoRR},
  volume       = {abs/2602.21073},
  year         = {2026},
  url          = {https://doi.org/10.48550/arXiv.2602.21073},
  doi          = {10.48550/ARXIV.2602.21073},
  eprinttype   = {arXiv},
  eprint       = {2602.21073},
  bibsource    = {dblp computer science bibliography, https://dblp.org}
}

@InProceedings{AKMG04,
author="Vardhan, Abhay
and Sen, Koushik
and Viswanathan, Mahesh
and Agha, Gul",
editor="Davies, Jim
and Schulte, Wolfram
and Barnett, Mike",
title="Learning to Verify Safety Properties",
booktitle="Formal Methods and Software Engineering",
year="2004",
publisher="Springer Berlin Heidelberg",
address="Berlin, Heidelberg",
pages="274--289",
isbn="978-3-540-30482-1"
}

@article{GS92,
  author =        {German, Steven M. and Sistla, A. Prasad},
  journal =       {Journal of the ACM},
  month =         jul,
  number =        {3},
  pages =         {675-735},
  publisher =     {ACM Press},
  title =         {Reasoning about Systems with Many Processes},
  volume =        {39},
  year =          {1992},
}

@inproceedings{Touili01,
  author       = {Tayssir Touili},
  editor       = {Richard Mayr},
  title        = {Regular Model Checking using Widening Techniques},
  booktitle    = {Verification of Parameterized Systems, {VEPAS} 2001, Satellite Workshop
                  of {ICALP} 2001, Crete, Greece, July 13, 2001},
  series       = {Electronic Notes in Theoretical Computer Science},
  volume       = {50},
  number       = {4},
  pages        = {342--356},
  publisher    = {Elsevier},
  year         = {2001},
  url          = {https://doi.org/10.1016/S1571-0661(04)00187-2},
  doi          = {10.1016/S1571-0661(04)00187-2},
  bibsource    = {dblp computer science bibliography, https://dblp.org}
}

@article{AbdullaHH16,
  author       = {Parosh Aziz Abdulla and
                  Fr{\'{e}}d{\'{e}}ric Haziza and
                  Luk{\'{a}}s Hol{\'{\i}}k},
  title        = {Parameterized verification through view abstraction},
  journal      = {Int. J. Softw. Tools Technol. Transf.},
  volume       = {18},
  number       = {5},
  pages        = {495--516},
  year         = {2016},
  url          = {https://doi.org/10.1007/s10009-015-0406-x},
  doi          = {10.1007/S10009-015-0406-X},
  bibsource    = {dblp computer science bibliography, https://dblp.org}
}

@article{Abdulla12,
  author       = {Parosh Aziz Abdulla},
  title        = {Regular model checking},
  journal      = {Int. J. Softw. Tools Technol. Transf.},
  volume       = {14},
  number       = {2},
  pages        = {109--118},
  year         = {2012},
  url          = {https://doi.org/10.1007/s10009-011-0216-8},
  doi          = {10.1007/S10009-011-0216-8},
  bibsource    = {dblp computer science bibliography, https://dblp.org}
}

@InProceedings{Jonsson00,
author="Jonsson, Bengt
and Nilsson, Marcus",
editor="Graf, Susanne
and Schwartzbach, Michael",
title="Transitive Closures of Regular Relations for Verifying Infinite-State Systems",
booktitle="Tools and Algorithms for the Construction and Analysis of Systems",
year="2000",
publisher="Springer Berlin Heidelberg",
address="Berlin, Heidelberg",
pages="220--235",
isbn="978-3-540-46419-8"
}

@article{LiffitonS08,
  author       = {Mark H. Liffiton and
                  Karem A. Sakallah},
  title        = {Algorithms for Computing Minimal Unsatisfiable Subsets of Constraints},
  journal      = {J. Autom. Reason.},
  volume       = {40},
  number       = {1},
  pages        = {1--33},
  year         = {2008},
  url          = {https://doi.org/10.1007/s10817-007-9084-z},
  doi          = {10.1007/S10817-007-9084-Z},
  bibsource    = {dblp computer science bibliography, https://dblp.org}
}

@inproceedings{imms-sat18,
  author    = {Alexey Ignatiev and Antonio Morgado and Joao Marques{-}Silva},
  title     = {{PySAT:} {A} {Python} Toolkit for Prototyping with {SAT} Oracles},
  booktitle = {SAT},
  pages     = {428--437},
  year      = {2018},
  url       = {https://doi.org/10.1007/978-3-319-94144-8_26},
  doi       = {10.1007/978-3-319-94144-8_26}
}

@inproceedings{itk-sat24,
  author    = {Alexey Ignatiev and Zi Li Tan and Christos Karamanos},
  title     = {Towards Universally Accessible {SAT} Technology},
  booktitle = {SAT},
  pages     = {4:1--4:11},
  year      = {2024},
  url       = {https://doi.org/10.4230/LIPIcs.SAT.2024.16},
  doi       = {10.4230/LIPICS.SAT.2024.16},
}

@phdthesis{Neider14,
  author    = {Daniel Neider},
  title     = {Applications of automata learning in verification and synthesis},
  school    = {{RWTH} Aachen University},
  year      = {2014},
  url       = {http://darwin.bth.rwth-aachen.de/opus3/volltexte/2014/5169},
  urn       = {urn:nbn:de:hbz:82-opus-51699},
  bibsource = {dblp computer science bibliography, https://dblp.org}
}

@inproceedings{NeiderJ13,
  author       = {Daniel Neider and
                  Nils Jansen},
  editor       = {Guillaume Brat and
                  Neha Rungta and
                  Arnaud Venet},
  title        = {Regular Model Checking Using Solver Technologies and Automata Learning},
  booktitle    = {{NASA} Formal Methods, 5th International Symposium, {NFM} 2013, Moffett
                  Field, CA, USA, May 14-16, 2013. Proceedings},
  series       = {Lecture Notes in Computer Science},
  volume       = {7871},
  pages        = {16--31},
  publisher    = {Springer},
  year         = {2013},
  url          = {https://doi.org/10.1007/978-3-642-38088-4\_2},
  doi          = {10.1007/978-3-642-38088-4\_2},
  bibsource    = {dblp computer science bibliography, https://dblp.org}
}

@inproceedings{ChenHLR17,
  author    = {Yu{-}Fang Chen and
               Chih{-}Duo Hong and
               Anthony W. Lin and
               Philipp R{\"{u}}mmer},
  editor    = {Daryl Stewart and
               Georg Weissenbacher},
  title     = {Learning to prove safety over parameterised concurrent systems},
  booktitle = {2017 Formal Methods in Computer Aided Design, {FMCAD} 2017, Vienna,
               Austria, October 2-6, 2017},
  pages     = {76--83},
  publisher = {{IEEE}},
  year      = {2017},
  url       = {https://doi.org/10.23919/FMCAD.2017.8102244},
  doi       = {10.23919/FMCAD.2017.8102244},
  bibsource = {dblp computer science bibliography, https://dblp.org}
}

@InProceedings{czerner2024,
  author =	{Czerner, Philipp and Esparza, Javier and Krasotin, Valentin and Welzel-Mohr, Christoph},
  title =	{{Computing Inductive Invariants of Regular Abstraction Frameworks}},
  booktitle =	{35th International Conference on Concurrency Theory (CONCUR 2024)},
  pages =	{19:1--19:18},
  series =	{Leibniz International Proceedings in Informatics (LIPIcs)},
  ISBN =	{978-3-95977-339-3},
  ISSN =	{1868-8969},
  year =	{2024},
  volume =	{311},
  editor =	{Majumdar, Rupak and Silva, Alexandra},
  publisher =	{Schloss Dagstuhl -- Leibniz-Zentrum f{\"u}r Informatik},
  address =	{Dagstuhl, Germany},
  URL =		{https://drops.dagstuhl.de/entities/document/10.4230/LIPIcs.CONCUR.2024.19},
  URN =		{urn:nbn:de:0030-drops-207919},
  doi =		{10.4230/LIPIcs.CONCUR.2024.19}
}

@inproceedings{MWSKF23,
  author       = {Mark Moeller and
                  Thomas Wiener and
                  Alaia Solko{-}Breslin and
                  Caleb Koch and
                  Nate Foster and
                  Alexandra Silva},
  editor       = {Karim Ali and
                  Guido Salvaneschi},
  title        = {Automata Learning with an Incomplete Teacher},
  booktitle    = {37th European Conference on Object-Oriented Programming, {ECOOP} 2023,
                  July 17-21, 2023, Seattle, Washington, United States},
  series       = {LIPIcs},
  volume       = {263},
  pages        = {21:1--21:30},
  publisher    = {Schloss Dagstuhl - Leibniz-Zentrum f{\"{u}}r Informatik},
  year         = {2023},
  url          = {https://doi.org/10.4230/LIPIcs.ECOOP.2023.21},
  doi          = {10.4230/LIPICS.ECOOP.2023.21},
  bibsource    = {dblp computer science bibliography, https://dblp.org}
}

@Article{MWSKF23-artifact,
  author =	{Moeller, Mark and Wiener, Thomas and Solko-Breslin, Alaia and Koch, Caleb and Foster, Nate and Silva, Alexandra},
  title =	{{Automata Learning with an Incomplete Teacher (Artifact)}},
  pages =	{21:1--21:3},
  journal =	{Dagstuhl Artifacts Series},
  ISSN =	{2509-8195},
  year =	{2023},
  volume =	{9},
  number =	{2},
  editor =	{Moeller, Mark and Wiener, Thomas and Solko-Breslin, Alaia and Koch, Caleb and Foster, Nate and Silva, Alexandra},
  publisher =	{Schloss Dagstuhl -- Leibniz-Zentrum f{\"u}r Informatik},
  address =	{Dagstuhl, Germany},
  URL =		{https://drops.dagstuhl.de/entities/document/10.4230/DARTS.9.2.21},
  URN =		{urn:nbn:de:0030-drops-182612},
  doi =		{10.4230/DARTS.9.2.21}
}

@article{RS93,
  author       = {Ronald L. Rivest and
  Robert E. Schapire},
  title        = {Inference of Finite Automata Using Homing Sequences},
  journal      = {Inf. Comput.},
  volume       = {103},
  number       = {2},
  pages        = {299--347},
  year         = {1993},
  url          = {https://doi.org/10.1006/inco.1993.1021},
  doi          = {10.1006/INCO.1993.1021},
  bibsource    = {dblp computer science bibliography, https://dblp.org}
}

@article{Angluin87,
  author       = {Dana Angluin},
  title        = {Learning Regular Sets from Queries and Counterexamples},
  journal      = {Inf. Comput.},
  volume       = {75},
  number       = {2},
  pages        = {87--106},
  year         = {1987},
  url          = {https://doi.org/10.1016/0890-5401(87)90052-6},
  doi          = {10.1016/0890-5401(87)90052-6},
  bibsource    = {dblp computer science bibliography, https://dblp.org}
}

@article{EsparzaRW25,
  author       = {Javier Esparza and
                  Michael Raskin and
                  Christoph Welzel{-}Mohr},
  title        = {Regular Model Checking Upside-Down: An Invariant-Based Approach},
  journal      = {Log. Methods Comput. Sci.},
  volume       = {21},
  number       = {1},
  pages        = {4},
  year         = {2025},
  url          = {https://doi.org/10.46298/lmcs-21(1:4)2025},
  doi          = {10.46298/LMCS-21(1:4)2025},
  bibsource    = {dblp computer science bibliography, https://dblp.org}
}

@inproceedings{EsparzaRW22,
  author       = {Javier Esparza and
                  Mikhail A. Raskin and
                  Christoph Welzel},
  editor       = {Bartek Klin and
                  Slawomir Lasota and
                  Anca Muscholl},
  title        = {Regular Model Checking Upside-Down: An Invariant-Based Approach},
  booktitle    = {33rd International Conference on Concurrency Theory, {CONCUR} 2022,
                  September 12-16, 2022, Warsaw, Poland},
  series       = {LIPIcs},
  volume       = {243},
  pages        = {23:1--23:19},
  publisher    = {Schloss Dagstuhl - Leibniz-Zentrum f{\"{u}}r Informatik},
  year         = {2022},
  url          = {https://doi.org/10.4230/LIPIcs.CONCUR.2022.23},
  doi          = {10.4230/LIPICS.CONCUR.2022.23},
  bibsource    = {dblp computer science bibliography, https://dblp.org}
}

@InProceedings{Leucker12,
  author = "Leucker, Martin
  and Neider, Daniel",
  editor = "Margaria, Tiziana
  and Steffen, Bernhard",
  title = "Learning Minimal Deterministic Automata from Inexperienced Teachers",
  booktitle = "Leveraging Applications of Formal Methods, Verification and Validation. Technologies for Mastering Change",
  year = "2012",
  publisher = "Springer Berlin Heidelberg",
  address = "Berlin, Heidelberg",
  pages = "524--538",
  isbn = "978-3-642-34026-0"
}

@InProceedings{Grinchtein06,
  author = "Grinchtein, Olga
  and Leucker, Martin
  and Piterman, Nir",
  editor = "Furbach, Ulrich
  and Shankar, Natarajan",
  title = "Inferring Network Invariants Automatically",
  booktitle = "Automated Reasoning",
  year = "2006",
  publisher = "Springer Berlin Heidelberg",
  address = "Berlin, Heidelberg",
  pages = "483--497",
  isbn = "978-3-540-37188-5"
}

@InProceedings{Chen09,
  author = "Chen, Yu-Fang
  and Farzan, Azadeh
  and Clarke, Edmund M.
  and Tsay, Yih-Kuen
  and Wang, Bow-Yaw",
  editor = "Kowalewski, Stefan
  and Philippou, Anna",
  title = "Learning Minimal Separating DFA's for Compositional Verification",
  booktitle = "Tools and Algorithms for the Construction and Analysis of Systems",
  year = "2009",
  publisher = "Springer Berlin Heidelberg",
  address = "Berlin, Heidelberg",
  pages = "31--45",
  isbn = "978-3-642-00768-2"
}

@book{SATComp23,
  title = "Proceedings of SAT Competition 2023: Solver, Benchmark and Proof Checker Descriptions",
  editor = "Tomas Balyo and Marijn Heule and Markus Iser and Matti J{\"a}rvisalo and Martin Suda",
  year = "2023",
  language = "English",
  series = "Department of Computer Science Series of Publications B",
  publisher = "Department of Computer Science, University of Helsinki",
  address = "Finland",
}

@InProceedings{Biere24,
  author = "Biere, Armin
  and Faller, Tobias
  and Fazekas, Katalin
  and Fleury, Mathias
  and Froleyks, Nils
  and Pollitt, Florian",
  editor = "Gurfinkel, Arie
  and Ganesh, Vijay",
  title = "CaDiCaL 2.0",
  booktitle = "Computer Aided Verification",
  year = "2024",
  publisher = "Springer Nature Switzerland",
  address = "Cham",
  pages = "133--152",
  isbn = "978-3-031-65627-9"
}

@inproceedings{Howar22,
  author       = {Falk Howar and
  Bernhard Steffen},
  editor       = {Nils Jansen and
  Mari{\"{e}}lle Stoelinga and
  Petra van den Bos},
  title        = {Active Automata Learning as Black-Box Search and Lazy Partition Refinement},
  booktitle    = {A Journey from Process Algebra via Timed Automata to Model Learning
  - Essays Dedicated to Frits Vaandrager on the Occasion of His 60th
  Birthday},
  series       = {Lecture Notes in Computer Science},
  volume       = {13560},
  pages        = {321--338},
  publisher    = {Springer},
  year         = {2022},
  url          = {https://doi.org/10.1007/978-3-031-15629-8\_17},
  doi          = {10.1007/978-3-031-15629-8\_17},
  bibsource    = {dblp computer science bibliography, https://dblp.org}
}

@InProceedings{Vaandrager22Lsharp,
  author = "Vaandrager, Frits
  and Garhewal, Bharat
  and Rot, Jurriaan
  and Wi{\ss}mann, Thorsten",
  editor = "Fisman, Dana
  and Rosu, Grigore",
  title = "A New Approach for Active Automata Learning Based on Apartness",
  booktitle = "Tools and Algorithms for the Construction and Analysis of Systems",
  year = "2022",
  publisher = "Springer International Publishing",
  address = "Cham",
  pages = "223--243",
  isbn = "978-3-030-99524-9"
}

@article{OliveiraS01,
  author       = {Arlindo L. Oliveira and
                  Jo{\~{a}}o P. Marques Silva},
  title        = {Efficient Algorithms for the Inference of Minimum Size DFAs},
  journal      = {Mach. Learn.},
  volume       = {44},
  number       = {1/2},
  pages        = {93--119},
  year         = {2001},
  url          = {https://doi.org/10.1023/A:1010828029885},
  doi          = {10.1023/A:1010828029885},
  bibsource    = {dblp computer science bibliography, https://dblp.org}
}

@Inbook{Steffen2011,
  author="Steffen, Bernhard
  and Howar, Falk
  and Merten, Maik",
  editor="Bernardo, Marco
  and Issarny, Val{\'e}rie",
  title="Introduction to Active Automata Learning from a Practical Perspective",
  bookTitle="Formal Methods for Eternal Networked Software Systems: 11th International School on Formal Methods for the Design of Computer, Communication and Software Systems, SFM 2011, Bertinoro, Italy, June 13-18, 2011. Advanced Lectures",
  year="2011",
  publisher="Springer Berlin Heidelberg",
  address="Berlin, Heidelberg",
  pages="256--296",
  isbn="978-3-642-21455-4",
  doi="10.1007/978-3-642-21455-4_8",
  url="https://doi.org/10.1007/978-3-642-21455-4_8"
}

@inproceedings{DBLP:conf/models/YaacovWAH25,
  author       = {Tom Yaacov and
  Gera Weiss and
  Gal Amram and
  Avi Hayoun},
  title        = {Automata Models for Effective Bug Pattern Description},
  booktitle    = {28th {ACM/IEEE} International Conference on Model Driven Engineering
  Languages and Systems, {MODELS} 2025, Grand Rapids, MI, USA, October
  5-10, 2025},
  pages        = {119--129},
  publisher    = {{IEEE}},
  year         = {2025},
  url          = {https://doi.org/10.1109/MODELS67397.2025.00017},
  doi          = {10.1109/MODELS67397.2025.00017},
  bibsource    = {dblp computer science bibliography, https://dblp.org}
}

\ifcameraready
\else
\newpage
\appendix

\section{SAT Encoding of the Hypothesis}
\label{app:sat_clauses}

For convenience's sake, we introduce the \defn{value} $\val$ operator such 
that, if $m$ is a model,
\begin{enumerate*}
  \item $\val_m(w) = \MEM(w, \emptyset)$
  if $\MEM(w, \emptyset) \in \{0, 1\}$, else
  \item $\val_m(w) = m[x_w]$ if $m[x_w]$ is defined, else
  \item $\val_m(w) = x_w$ and the answer is then said to be \defn{incomplete}.
\end{enumerate*}
We denote $\val = \val_\emptyset$.
Intuitively, $\val_m(w)$ first queries the teacher, then model $m$, then 
introduces a membership variable $x_w$.
Figure \ref{fig:clauses} lists the clauses $\clauses_{P, S}$ of our encoding.
We assume a total ordering $<$ on $\Sigma^*$.
\begin{figure}[h]
  {\renewcommand{\arraystretch}{1.25}
  \setlength{\tabcolsep}{7pt}
  \begin{tabular}{llc}
    \hline
    \textbf{Conditions on parameters} &
    \textbf{Formula} &
    \textbf{Name} \\
    \hline
    & $b_{\varepsilon}$ & $\cbasis{\varepsilon}$ \\
    $pa \in P$ & $b_{pa} \rightarrow b_p$ & $\cbasis{pa}$ \\
    $pa \in P$ & $b_{pa} \rightarrow e_{p, a, pa}$ & $\creach{pa}$ \\
    $a \in \Sigma, p,p' \in P, p'\neq pa, s\in S$ &
    $e_{p, a, p'} \rightarrow (\val(pas) \leftrightarrow \val(p's))$
    & $\ccong{p, a, p', s}$ \\
    $p, p_1, p_2 \in P, a\in \Sigma, p_1 < p_2$ &
    $e_{p, a, p_1} \rightarrow \neg e_{p, a, p_2}$ &
    $\cdet{p, a, p_1, p_2}$ \\
    $p\in P, a \in \Sigma$ & $b_p \rightarrow
    \bigvee_{p'\in P} e_{p, a, p'}$
    & $\cclos{p, a}$ \\
    $p, p' \in P, a \in \Sigma$ & $e_{p, a, p'} \rightarrow b_{p'}$ &
    $\csucc{p, a, p'}$ \\
    $pa, p' \in P, p' < pa$ & $b_{pa} \land b_{p'} \rightarrow
    \neg e_{p, a, p'}$ &
    $\csharp{p, a, p'}$ \\
    $(w_1, w_2) \in \indpairs$ & $x_{w_1} \rightarrow x_{w_2}$
    & $\cind{w_1, w_2}$ \\
    \hline
  \end{tabular}}
  \caption{The encoding $\clauses_{P, S}$ of the observation table.}
  \label{fig:clauses}
\end{figure}

These PFs can be trivially converted to CNF without exponential blow-up, 
resulting in $\bigo(|P|^3 |\Sigma| + |P|^2 |\Sigma|^2 |S|^2)$
clauses and $\bigo(|P|^2 |\Sigma| + |P| |S|)$ variables.
They are to be understood as follows:
\begin{description}
  \item[$\cbasis{\varepsilon}$] Word $\varepsilon$ always belongs to the   
  basis.
  \item[$\cbasis{pa}$] The basis must be prefix-closed.
  \item[$\creach{pa}$] If $p \cdot a$ belongs to the basis, it is the successor 
  of state $p$ by letter $a$.
  \item[$\ccong{p, a, p', s}$] If prefix $p$'s successor by letter $a$ is  
  $p'$, then $p \cdot a \equiv_{\obstab_m} p'$.
  \item[$\cdet{p, a, p_1, p_2}$] In order to generate a deterministic  
  hypothesis, state $p$ must admit an unique successor $p_1$ 
  induced by the equivalence relation $\equiv_{\obstab_m}$.
  \item[$\cclos{p, a}$] OT $\obstab_m$ must be closed:
  given a state $p$ in the basis, its successor by letter $a$ must be 
  $\equiv_{\obstab_m}$-equivalent to some $p' \in B$.
  \item[$\csucc{p, a, p'}$] Successors of a basis state must belong to the
  basis.
  \item[$\csharp{p, a, p'}$] OT $\obstab_m$ must be sharp:
  given $p', pa \in P$ s.t. $p' < pa$, $pa \not\equiv_{\obstab_m} p'$.
  \item[$\cind{w_1, w_2}$] The inductive constraints on $\cells$ must be 
  enforced.
\end{description}

\paragraph{Building the hypothesis.}
\label{sec:build_hyp}

Assume that $\clauses_{P, S}$ is satisfiable by a model $m$.
Then $m$ induces the following DFA $\hypm$:
\begin{itemize}
  \item Its state space is the basis $B_m = \{p \in P \mid m[b_p] = 1\}$.
  It is a prefix-closed subset of $P$ that contains $\varepsilon$
  thanks to clauses $\cbasis{\varepsilon}$ and $\cbasis{pa}$.
  \item Its initial state is $\varepsilon \in B_m$ and its set of final states 
  is $\{p \in B_m \mid \val_m(p) = 1\}$.
  \item For any $p \in B_m$ and $a \in \Sigma$, $\delta_m(p, a) = p'$,
  where $p'$ is the unique prefix in $B_m$ such that $m[e_{p, a, p'}] = 1$;
  it exists due to clauses $\cdet{p, a, p_1, p_2}$, $\cclos{p, a}$,
  and $\csucc{p, a, p'}$.
\end{itemize}
By design, the resulting hypothesis is \defn{$B_m$-compatible}:
$\forall p \in B_m$, $\hypm(p) = \val_m(p)$.

Note that if a prefix $p \in P$ does not belong to $B_m$,
there is no need to ``fill in the blanks" of lines $p$ and $p \cdot a$ of 
$\obstab$, $a \in \Sigma$,
due to said lines being likely to be dropped from $\obstab_m$.
Indeed, once $B_m$ is known, the only variables used to synthesize $\hypm$ are 
related to sub-table $\obstab_m$.
Thus, we may assume w.l.o.g. that for $a \in \Sigma$ and $s \in S$,
$m[e_{p, a, p'}]$, $m[x_{p \cdot s}]$ and $m[x_{pa \cdot s}]$ are only defined 
if $p, p' \in B_m$.

\paragraph{Analyzing the UNSAT core.}
\label{sec:unsat_core}

Should the instance $\clauses_{P, S}$ end up being UNSAT,
then its core always contains at least one clause $\cclos{p, a}$:
were all variables $e_{p, a, p'}$, $x_w$ and $b_p$ (save 
$b_\varepsilon$ that is set to $1$) set to $0$, then all the other clauses 
would be satisfied.
As discussed in Section~\ref{sec:fill_subtable}, we can use this core to 
refine $P$.

\section{Analyzing Simple Counterexamples}

\subsection{Proof of Property~\ref{th:refine_simple_ce}}
\label{prf:bp_bounds}

\begin{proof}
  Since $w$ is a simple counterexample, $\MEM(w) \in \{0, 1\}$ and
  $\hypm(w) \neq \MEM(w)$ \textbf{(\romannumeral 1)}.
  Moreover, $\agree{w}{0} = \MEM(w)$ \textbf{(\romannumeral 2)}
  due to $\pcomp{w}{0} = w$.
  
  Since $\pcomp{w}{|w|} = \repr{w}{\hypm}$, $\agree{w}{|w|} = 
  \val_m(\repr{w}{\hypm})$.
  But $\hypm(w) = \hypm(\repr{w}{\hypm})$ by definition of representatives, and
  $\hypm(\repr{w}{\hypm}) = \val_m(\repr{w}{\hypm})$ due to
  hypothesis $\hypm$ being $B_m$-compatible.
  Thus, $\agree{w}{|w|} = \hypm(w)$ \textbf{(\romannumeral 3)}.
  
  Finally, by \textbf{\romannumeral 1}, \textbf{\romannumeral 2}, and 
  \textbf{\romannumeral 3}, $\agree{w}{0} \neq \agree{w}{|w|}$.\hfill\qed
\end{proof}

\subsection{Proof of Theorem~\ref{th:refine_simple_ce}}
\label{prf:refine_simple_ce}

\begin{proof}
  Assume that there exists a model $m'$ subsuming $m$ and satisfying 
  $\clauses_{P, S'}$.
  Naturally, $m'$ induces the same basis $B$ and the same hypothesis $\hypm$. 
  Due to $\{\suffx{w}{i+1}, \ldots, \suffx{w}{j}\} \subseteq S'$, note that
  $c_{w, k}$ belongs to the updated set $\cells$ for any $k \in 
  \interv{i+1}{j}$, being the concatenation of $\repr{w}{k} \in P$ and 
  $\suffx{w}{k} \in S'$;
  thus, if $c_{w, k} \in U$ (see Definition~\ref{def:indpairs}), then 
  \textbf{(\romannumeral 1)} $m'[c_{w, k}]$ is defined.
  
  Word $w$ remains a counterexample to $\hypm$ and
  $\agree{w}{i} \neq \agree{w}{j}$ still holds.
  Moreover, $\agree{w}{k} \in \{0, 1\}$ for every $k \in \interv{i}{j}$ by
  \textbf{\romannumeral 1}.
  Thus, interval $\interv{i}{j}$ contains a breaking point:
  there always exists a $k \in \interv{i}{j-1}$ such that
  $\agree{w}{k} \neq \agree{w}{k+1}$, i.e.
  $\val_{m'}(\repr{w}{k} \cdot w[k] \cdot \suffx{w}{k+1}) \neq 
  \val_{m'}(\repr{w}{k+1} \cdot \suffx{w}{k+1})$.
  
  By clause $\ccong{\repr{w}{k}, w[k], \repr{w}{k+1}, \suffx{w}{k+1}}$,
  $m'[e_{\repr{w}{k}, w[k], \repr{w}{k+1}}] = 0$.
  But by design of $\hypm$, $\delta_{m'}(\repr{w}{k}, w[k]) = \repr{w}{k+1}$, 
  thus $m'[e_{\repr{w}{k}, w[k], \repr{w}{k+1}}] = 1$.
  Finally, there is a contradiction and no such model $m'$ exists.
  Intuitively, suffix $\suffx{w}{k+1}$ distinguishes
  $\repr{w}{k} \cdot w[k]$ and $\repr{w}{k+1}$ despite these two words being 
  equivalent w.r.t.  $\equiv_{\obstab_{m'}}$.\hfill\qed
\end{proof}

\section{Analyzing Inductive Counterexamples}
\label{app:inductive_cex}

\subsection{Definitions for a two-dimensional analysis}
\label{app:inductive_cex_defs}

\begin{definition}[Two-dimensional incomplete evaluation predicate]
  \label{def:agree2d}
  Let $(w_1, w_2)$ be an inductive counterexample to $\hypm$.
  For $(i, j) \in \interv{0}{|w_1|} \times \interv{0}{|w_2|}$,
  we define $\agreetd{w}{i}{j}$ as follows:
  \begin{enumerate}
    \item if $\MEM(\pcomp{w_1}{i}), \MEM(\pcomp{w_2}{j}) \in \{0, 1\}$, then
    $\agreetd{w}{i}{j} =
    \MEM(\pcomp{w_1}{i}) \rightarrow \MEM(\pcomp{w_2}{j})$,
    \item else if $(\pcomp{w_1}{i}, \pcomp{w_2}{j}) \in \MEM(\pcomp{w_1}{i},
    \{\pcomp{w_2}{j}\})$, then $\agreetd{w}{i}{j} = 1$,
    \item else if $m[x_{\pcomp{w_1}{i}}]$, $m[x_{\pcomp{w_2}{j}}]$ are defined, 
    then $\agreetd{w}{i}{j} =
    m[x_{\pcomp{w_1}{i}}] \rightarrow m[x_{\pcomp{w_2}{j}}]$,
    \item else $\agreetd{w}{i}{j} = \square$.
  \end{enumerate}
\end{definition}

\begin{property}
  \label{prop:inductive_bp_bounds}
  $\agreetd{w}{0}{0} = 1$ and $\agreetd{w}{|w_1|}{|w_2|} = 0$.
\end{property}

Let $\preceq$ be the partial order relation on
$\interv{0}{|w_1|} \times \interv{0}{|w_2|}$ such that
$(i,j) \preceq (i',j') \leftrightarrow (i \leq i') \land (j \leq j')$.
We consider \emph{rectangles} instead of intervals:

\begin{definition}
  A \defn{Breaking Rectangle} (BR) is a rectangle
  $\interv{i}{i'} \times \interv{j}{j'}$ such that
  $(i, j), (i', j') \in [0:|w_1|] \times [0:|w_2|]$,
  $(i, j) \prec (i', j')$,
  $\{\agreetd{w}{i}{j}, \agreetd{w}{i'}{j'}\} \subseteq \{0, 1\}$, and
  $\agreetd{w}{i}{j} \neq \agreetd{w}{i'}{j'}$.
\end{definition}

\subsection{Proof of Property~\ref{prop:inductive_bp_bounds}}
\label{prf:inductive_bp_bounds}

Property~\ref{prop:inductive_bp_bounds} guarantees that a BR always 
exists.

\begin{proof}
  Note that $\agreetd{w}{0}{0} = 1$ due to
  $(\pcomp{w_1}{0}, \pcomp{w_2}{0}) = (w_1, w_2)$
  being an inductive pair by definition.
  Moreover, $\hypm(w_i) = \val_m(\pcomp{w_i}{|w_i|})$ for $i \in \{1, 2\}$.
  But $\hypm(w_1) \not\rightarrow \hypm(w_2)$ as $(w_1, w_2)$ is an
  inductive counterexample.
  Thus
  $\val_m(\pcomp{w_1}{|w_1|}) \not\rightarrow \val_m(\pcomp{w_2}{|w_2|})$.
  and $\agreetd{w}{|w_1|}{|w_2|} = 0$.\hfill\qed
\end{proof}

\subsection{Proof of Theorem~\ref{th:refine_inductive_ce}}
\label{prf:refine_inductive_ce}

\begin{proof}
  Assume that there exists a model $m'$ subsuming $m$ and satisfying 
  $\clauses_{P, S'}$.
  Naturally, $m'$ induces the same basis $B$, the same hypothesis $\hypm$, and
  the same counterexample $w = (w_1, w_2)$. 
  Due to $\{\suffx{w_1}{i}, \ldots, \suffx{w_1}{i'}\} \cup
  \{\suffx{w_2}{j}, \ldots, \suffx{w_2}{j'}\} \subseteq S'$, note that
  \textbf{(\romannumeral 1)} $c_{w_1, k}$ (resp. $c_{w_2, l}$) belongs to
  the updated set $\cells$ for any $k \in \interv{i}{i'}$
  (resp. $l \in \interv{j}{j'}$);
  thus, if $c_{w_1, k} \in U$ (resp. $c_{w_2, l} \in U$), then
  \textbf{(\romannumeral 2)} $m'[c_{w_1, k}]$ (resp. $m'[c_{w_2, l}]$) is 
  defined.
  
  Note that $\agreetd{w}{i}{j} \neq \agreetd{w}{i'}{j'}$ still holds.
  Moreover, $\agreetd{w}{k}{l} \in \{0, 1\}$
  for every $(k,l) \in \interv{i}{i'} \times \interv{j}{j'}$ by 
  \textbf{\romannumeral 2}.
  Thus, rectangle $\interv{i}{i'} \times \interv{j}{j'}$ always contains 
  a breaking point.
  
  \begin{description}
    \item[Horizontal (1, 0).] Assume that is horizontal,
    i.e. $\exists k \in \interv{i}{i'-1}$ and $\exists l \in \interv{j}{j'}$ 
    s.t. $\agreetd{w}{k}{l} \neq \agreetd{w}{k+1}{l}$,
    and that $(\agreetd{w}{k}{l}, \agreetd{w}{k+1}{l}) = (1, 0)$.
    
    As $\agreetd{w}{k+1}{l} = 0$, case \textbf{1.} or \textbf{3.} of 
    Definition~\ref{def:agree2d} apply:
    \textbf{(\romannumeral 3)} $\val_{m'}(\pcomp{w_1}{k+1}) = 1$ and
    \textbf{(\romannumeral 4)} $\val_{m'}(\pcomp{w_2}{l}) = 0$, as this is
    the only way to negate the implication.
    
    Let us now study the consequences of $\agreetd{w}{k}{l} = 1$.
    \begin{itemize}
      \item If case \textbf{2.} of Definition~\ref{def:agree2d} applies, 
      $(\pcomp{w_1}{k}, \pcomp{w_2}{l})$ is an inductive pair.
      Since $\pcomp{w_1}{k}$ and $\pcomp{w_2}{l}$ both belong to $\cells$
      by \textbf{\romannumeral 1},
      $(\pcomp{w_1}{k}, \pcomp{w_2}{l}) \in \indpairs$;
      
      Suppose that $\val_{m'}(\pcomp{w_1}{k}) = 1$;
      then $\val_{m'}(\pcomp{w_2}{l}) = 1$
      due to $\cind{\pcomp{w_1}{k}, \pcomp{w_2}{l}}$.
      This contradicts \textbf{\romannumeral 4}.
      As a consequence, $\val_{m'}(\pcomp{w_1}{k}) = 0$.
      
      \item Assume case \textbf{1.} or case \textbf{3.} of 
      Definition~\ref{def:agree2d} apply;
      suppose again that $\val_{m'}(\pcomp{w_1}{k}) = 1$;
      the implication $\val_{m'}(\pcomp{w_1}{k}) \rightarrow 
      \val_{m'}(\pcomp{w_2}{l})$
      results in the value $\val_{m'}(\pcomp{w_2}{l}) = 1$,
      contradicting \textbf{\romannumeral 4}.
      As a consequence, $\val_{m'}(\pcomp{w_1}{k}) = 0$.
    \end{itemize}
    
    In both cases, we have proven that $\val_{m'}(\pcomp{w_1}{k}) = 0$.
    Finally, by \textbf{\romannumeral 3}, we have proven
    $\val_{m'}(\pcomp{w_1}{k}) \notleftrightarrow 
    \val_{m'}(\pcomp{w_1}{k+1})$.
    From then on, a reasoning similar to the proof of 
    Theorem~\ref{th:refine_simple_ce} demonstrates that $m'$ can't
    satisfy $\clauses_{P, S'}$.
    
    \item[Horizontal (0, 1), Vertical (1, 0), Vertical (0, 1).]
    These proofs are for all intents and purposes similar to 
    the previous case and we omit them.
    \end{description}
In every case, $m'$ violates the set of clauses $\clauses_{P, S'}$.\hfill\qed
\end{proof}

\section{Termination and Correctness of $\linda$}

\subsection{Proof of Lemma~\ref{lem:progress}}
\label{prf:progress}

\begin{proof}
  We use a proof by induction on $n$, the base case $n = 0$ being trivially
  satisfied by $B_0 = \{\varepsilon\}$.
  Let us build $B_{n+1}$ under the assumption that $B_n$ exists.
  
  Consider $\core \subseteq \iter{\clauses_{P,S}}{n}$ the $n$-th UNSAT core.
  Suppose by contradiction that for any pair $(p, a) \in B_n \times \Sigma$
  such that $\cclos{p, a} \in \core$,
  there exists $q \in B_n$ such that $pa \equiv_T q$.
  For any $(p, a) \in B_n \times \Sigma$, we define $q_{pa}$ as follows:
  \begin{itemize}
    \item if $pa \in B_n$, $q_{pa} = pa$;
    \item else if $\cclos{p, a} \in \core$ but $pa \notin B_n$, we pick some 
    $q_{pa} \in B_n$ such that $q_{pa} \equiv_T pa$;
    \item otherwise $q_{pa} = \varepsilon$—we pick an arbitrary value.
  \end{itemize}
  
  Intuitively, we use $q_{pa}$ to find representatives in $B_n$ of the pairs
  $(p, a)$ such that $\cclos{p, a} \in \core$.
  Let us now define a model $m$ induced by $B_n$ and $T$ s.t.:
  \begin{itemize}
    \item For all $ p \in \iter{P}{n}$, $m[b_p] = 1$ if and only if
    $p \in B_n$.
    \item For all
    $(p, a, p') \in \iter{P}{n} \times \Sigma \times\iter{P}{n}$,
    $m[e_{p, a, p'}] = 1$ if and only if $p \in B_n$ and $p' = q_{pa}$.
    \item For all $w \in \Sigma^*$ such that $x_w$ appears in some clause in 
    $\core$, $m[x_w] = 1$ if and only if $w \in T$.
  \end{itemize}
  
  Let us prove that $m$ satisfies every formula that can appear in $\core$:
  \begin{description}
    \item[$\cbasis{\varepsilon}$.] Since $B_n$ is non-empty and prefix-closed,
    it contains $\varepsilon$, hence $m[b_{\varepsilon}] = 1$ and
    $m \vDash \cbasis{\varepsilon}$.
    \item[$\cbasis{pa}$.] If $m[b_{pa}] = 1$, then $pa \in B_n$ by definition
    of $m$, so $p \in B_n$, $m[b_p] = 1$, and $m \vDash \cbasis{pa}$.
    \item[$\creach{pa}$.] If $m[b_{pa}] = 1$, then $pa \in B_n$ and
    $q_{pa} = pa$, thus $m[e_{p, a, pa}] = 1$ and $m \vDash \creach{pa}$.
    \item[$\ccong{p,a,p',s}$.] If $m \vDash e_{p, a, p'}$, then $p' = q_{pa}$
    and $p'\equiv_{T} pa$, hence
    $m \vDash \nu(x_{pas}) \leftrightarrow \nu(x_{p's})$
    and $m \vDash \ccong{p,a,p',s}$.
    \item[$\cdet{p, a, p_1, p_2}$] If $m[e_{p, a, p_1}] = 1$, then
    $p_1 = q_{pa}$, thus $p_2 \neq q_{pa}$, hence $m[e_{p, a, p_2}] = 0$ and 
    $m \vDash \cdet{p, a, p_1, p_2}$.
    \item[$\cclos{p, a}$.] If $m[b_p] = 1$, then $p \in B_n$ and
    $m[e_{p, a, q_{pa}}] = 1$, thus $m \vDash \cclos{p, a}$.
    \item[$\csucc{p, a, p'}$.] If $m[e_{p, a, p'}] = 1$ then $p \in B_n$, thus
    $m[b_p] = 1$ and $m \vDash \csucc{p, a, p'}$.
    \item[$\csharp{p, a, p'}$.] If $m[b_{pa}] = 1$ and $m[b_{p'}] = 1$,
    by definition of $m$, $m[e_{p, a, p'}] = 1$ if and only if $p \in B_n$ and 
    $p' = q_{pa}$.
    But if $pa \in B_n$, $q_{pa} = pa$, and $p' \neq q_{pa}$, thus
    $m[e_{p, a, p'}] = 0$ and $m \vDash \csharp{p, a, p'}$.
    \item[$\cind{w_1, w_2}$.] Since $T \in \tgt$,
    $w_1 \in T \rightarrow w_2 \in\ T$ and we also have $m \vDash \cind{w_1, w_2}$.
  \end{description}
  
  Thus, $m \vDash \core$, but $\core$ is meant to be UNSAT;
  there is a contradiction.
  
  As a consequence, there exists $p \in B_n$ and $a \in \Sigma$ such that
  $\cclos{p, a} \in \core$  and for all $q \in B_n$, $pa \not\equiv_T q$.
  In particular, $pa$ is a freshly added prefix to $\iter{P}{n+1}$, distinct
  from all elements in $B_n$.
  We therefore define $B_{n+1} = B_n \uplus \{pa\}$,
  finally proving the inductive case.\hfill\qed
\end{proof}

\subsection{Proof of Theorem~\ref{thm:termination}}
\label{prf:termination}

\begin{proof}
  Assume the algorithm did not terminate after $n-1$ refinements.
  After $n$ UNSAT core refinements, by Lemma~\ref{lem:progress}, there
  exists a prefix-closed $B_{n} \subseteq \iter{P}{n}$ such that
  $|B_{n}|= n+1$ and $\forall p, p' \in B_n$, if $ \neq p'$ then
  $p \not\equiv_T p'$, that is,
  $T$ has at least $n+1$ equivalence classes, hence,
  there is a contradiction.\hfill\qed
\end{proof}

\fi

\end{document}